\begin{document}
\newtheorem{proposition}{Proposition}
\newtheorem{theorem}{Theorem}
\newtheorem{corollary}{Corollary}
\newtheorem{lemma}{Lemma}
\newtheorem{problem}{Problem}
\newtheorem{remark}{Remark}
\newtheorem{assumption}{Assumption}
\newtheorem{claim}{Claim}
\newtheorem{algorithm}{Algorithm}
\newtheorem{definition}{Definition}

\title{Robust Transceiver Design for $K$-Pairs\\Quasi-Static MIMO Interference Channels via Semi-Definite Relaxation}

\author{Eddy~Chiu,~\IEEEmembership{Student~Member,~IEEE,}
        Vincent~K.~N.~Lau,~\IEEEmembership{Senior~Member,~IEEE},
        Huang~Huang,~\IEEEmembership{Student~Member,~IEEE,}
        Tao~Wu and Sheng~Liu
\thanks{Manuscript received December 28, 2009; revised May 28, 2010
and September 8, 2010; accepted September 9, 2010. The associate
editor coordinating the review of this paper and approving it for
publication was O.~Simeone. The paper was presented in part at the
Asilomar Conference on Signals, Systems, and Computing, Pacific
Grove, CA, November 2010.}
\thanks{E.~Chiu, V.~K.~N.~Lau, and H.~Huang are with the Department
of Electronic and Computer Engineering, Hong Kong University of
Science and Technology, Hong Kong (e-mail: echiua@ieee.org,
eeknlau@ust.hk, and huang@ust.hk).}
\thanks{T.~Wu and S.~Liu are with Huawei Technologies, Co. Ltd.,
China (e-mail: walnut@huawei.com and martin.liu@huawei.com).}}

\markboth{To appear in IEEE Transactions on Wireless
Communications,~2010} {{Chiu \MakeLowercase{\textit et. al.}}:
Robust Transceiver Design for $K$-Pairs Quasi-Static MIMO
Interference Channels via Semi-Definite Relaxation}

\maketitle

\begin{abstract}
In this paper, we propose a robust transceiver design for the
$K$-pair quasi-static MIMO interference channel. Each transmitter is
equipped with $M$ antennas, each receiver is equipped with $N$
antennas, and the $k^{\textrm{th}}$ transmitter sends $L_k$
independent data streams to the desired receiver. In the literature,
there exist a variety of theoretically promising transceiver designs
for the interference channel such as interference alignment-based
schemes, which have feasibility and practical limitations. In order
to address practical system issues and requirements, we consider a
transceiver design that enforces robustness against imperfect
channel state information (CSI) as well as fair performance among
the users in the interference channel. Specifically, we formulate
the transceiver design as an optimization problem to maximize the
worst-case signal-to-interference-plus-noise ratio among all users.
We devise a low complexity iterative algorithm based on alternative
optimization and semi-definite relaxation techniques. Numerical
results verify the advantages of incorporating into transceiver
design for the interference channel important practical issues such
as CSI uncertainty and fairness performance.
\end{abstract}

\begin{IEEEkeywords}
Interference channel, robust transceiver, imperfect CSI, precoder
design, decorrelator design, max-min fair, alternative optimization,
semi-definite relaxation.
\end{IEEEkeywords}


\section{Introduction} \label{Sec:Introduction}
In many wireless network scenarios, the channel is shared among
multiple systems. The coexisting systems create mutual interference,
which poses great challenges for communication systems design.
Conventionally, interference is either treated as noise in the weak
interference case \cite{Jnl:Interferce_as_noise:Tse} or canceled at
the receiver in the strong interference case
\cite{Jnl:Interferce_cancel:Carleial,
Bok:Fundamentals_wireless:Tse}. In the past decade, various schemes
are proposed to utilize multiple signaling dimensions for
interference avoidance and mitigation. In particular, in the recent
breakthrough work \cite{Jnl:IA:Cadambe_Jafar}, the authors show that
the paradigm of interference alignment (IA) can be exploited to
confine mutual interference to some lower dimensional subspace, so
that desired signals can be transmitted on interference-free
subspace. It is shown that this IA scheme, if feasible, is optimal
in the degree-of-freedom (DoF) sense. The results of
\cite{Jnl:IA:Cadambe_Jafar} has triggered a number of extensions
\cite{Cnf:IA_alternating_minimization:Peters_Heath,
Misc:Distributed_IA:Gomadam_Cadambe_Jafar} and related works
\cite{Misc:Real_IA_SISO:Motahari_Khandani,
Misc:Real_IA_MIMO:Ghasemi_Khandani}. These IA-based schemes, albeit
theoretically promising, have various limitations. First, IA-based
schemes require ideal conditions to be feasible such as perfect
channel state information (CSI) and very large dimensions on the
signal space. For example, the conventional IA scheme
\cite{Jnl:IA:Cadambe_Jafar} requires time or frequency extensions to
have feasible solutions. For $K$-pairs quasi-static MIMO
interference channels where time / frequency extensions are not
viable, the IA scheme \cite{Jnl:IA:Cadambe_Jafar} is only feasible
for $K \leq 3$ (cf. \cite{Cnf:IA_feasibility:Yetis_Jafar}). Second,
while IA-based schemes have promising DoF performance -- which is an
asymptotic performance measure for very high signal-to-noise ratio
(SNR) -- they are not optimal at medium SNR that correspond to
practical applications. When designing practical communication
systems for the interference channel, a number of technical issues
shall be considered. Specifically, in practice only imperfect CSI is
available and there are limited signaling dimensions. Moreover, it
is important to ensure satisfactory performance among all the
systems in the network.

In this paper, we consider the problem of robust transceiver design
for the $K$-pair quasi-static MIMO interference channel with
fairness considerations. Specifically, 1) we apply robust design
principles to provide resilience against CSI uncertainties; and 2)
we formulate the transceiver design as a precoder-decorrelator
optimization problem to maximize the worst-case
signal-to-interference-plus-noise ratio (SINR) among all users in
the interference channel. In the literature, precoder-decorrelator
optimization for worst-case SINR are proposed for broadcast and
point-to-point systems \cite{Jnl:RobustQosBroadcastMiso:Davidson,
Jnl:RobustQosP2PMimo:Palomar, Cnf:RobustQosBroadcastMimo:Boche,
Jnl:RobustQosP2PMimo:Miquel}. Specifically, in
\cite{Jnl:RobustQosBroadcastMiso:Davidson,
Jnl:RobustQosP2PMimo:Miquel} the authors consider precoding design
for the worst-case SINR in MISO broadcast channel, where it is shown
that the precoder optimization problem is always convex. In
\cite{Cnf:RobustQosBroadcastMimo:Boche} the authors consider
precoder-decorrelator design for the worst-case SINR MIMO broadcast
channel using an iterative algorithm based on solving convex
subproblems. On the other hand, in
\cite{Jnl:RobustQosP2PMimo:Palomar} the authors consider a
space-time coding scheme for the point-to-point channel with
imperfect channel knowledge. However, these existing works cannot be
extended to robust transceiver design for the MIMO interference
channel, which presents the following key technical challenges.

\textbf{The Precoder-Decorrelator Optimization Problem is NP-Hard:}
The precoder-decorrelator optimization problem for the interference
channel involves solving a separable homogeneous quadratically
constrained quadratic program (QCQP), which is NP-hard in general
\cite{Bok:Palomar, Bok:Palomar2}. One approach to facilitate solving
this class of problems is to apply semidefinite relaxation (SDR) by
relaxing rank constraints; this method was applied in precoding
design for MISO broadcast channel
\cite{Jnl:Rank_constrained_separable_SDP:Yongwei_Palomar,
Jnl:Downlink_beamforming:Ottersten} and for MISO multicast channel
\cite{Jnl:MBS-SDMA_using_SDR_with_perfect_CSI:Sidiropoulos_Luo,
Jnl:MBS_using_SDR_with_perfect_CSI:Sidiropoulos_Davidson_Luo}.
Although the resultant semidefinite program (SDP) may be solvable,
the optimization in general does not always have the desired rank
profile.

\textbf{Convergence of Alternative Optimization Algorithm:} Our
proposed solution is based on alternative optimization (AO). The
method of AO was proposed in \cite{Misc:Chan-Byoung_Chae,
Jnl:AO:Chan-Byoung_Chae} for precoder and decorrelator optimization
for multi-user MIMO broadcast channels. However, coupled with the
rank constrained SDP issues as well as the absence of
uplink-downlink duality (as in the case of broadcast channels)
\cite{Jnl:Beamforming_duality:Liu, Cnf:Beamforming_duality:Boche},
establishing the convergence proof of the AO algorithm in the
interference channel is non-trivial \cite{Jnl:AO_math_paper} and
traditional convergence proof \cite{Misc:Chan-Byoung_Chae,
Jnl:AO:Chan-Byoung_Chae} cannot be applied to our situations.

\emph{Notation}: In the sequel, we adopt the following notations.
$\mathbb{R}^{M \times N}$, $\mathbb{C}^{M \times N}$ and
$\mathbb{Z}^{M \times N}$ denote the set of real, complex and
integer $M \times N$ matrices, respectively; $\mathbb{R}_+$ denotes
the set of positive real numbers; upper and lower case letters
denote matrices and vectors, respectively; $\mathbb{H}^N$ denotes
the set of $N \times N$ Hermitian matrices; $\textbf{X} \succeq 0$
denotes that $\textbf{X}$ is a positive semi-definite matrix; $(
\cdot )^T$ and $( \cdot )^\dag$ denote transpose and Hermitian
transpose, respectively; $\textrm{rank} ( \cdot )$ and $\textrm{Tr}
( \cdot )$ denote matrix rank and trace, respectively;
$[\textbf{X}]_{(a,b)}$ denotes the $(a,b)^{\textrm{th}}$ element of
$\textbf{X}$; $|| \cdot ||$ denotes the Frobenius norm; $\mathcal{I}
( \cdot )$ denotes the indicate function; $\mathcal{K}$ denotes the
index set $\{ 1, \ldots, K \}$ and $\mathcal{L}_k$ denotes the index
set $\{ 1, \ldots, L_k \}$; $\textbf{0}_N$ denotes an $N \times 1$
vector of zeros and $\textbf{I}_{N}$ denotes an $N \times N$
identity matrix; $\mathbb{E}[ \cdot ]$ denotes expectation; and
$\mathcal{CN} ( \boldsymbol\mu, \boldsymbol \Phi )$ denotes complex
Gaussian distribution with mean $\boldsymbol \mu$ and covariance
matrix $\boldsymbol \Phi$.

\begin{figure}[t]
\centering
\includegraphics[width = 3.5in]{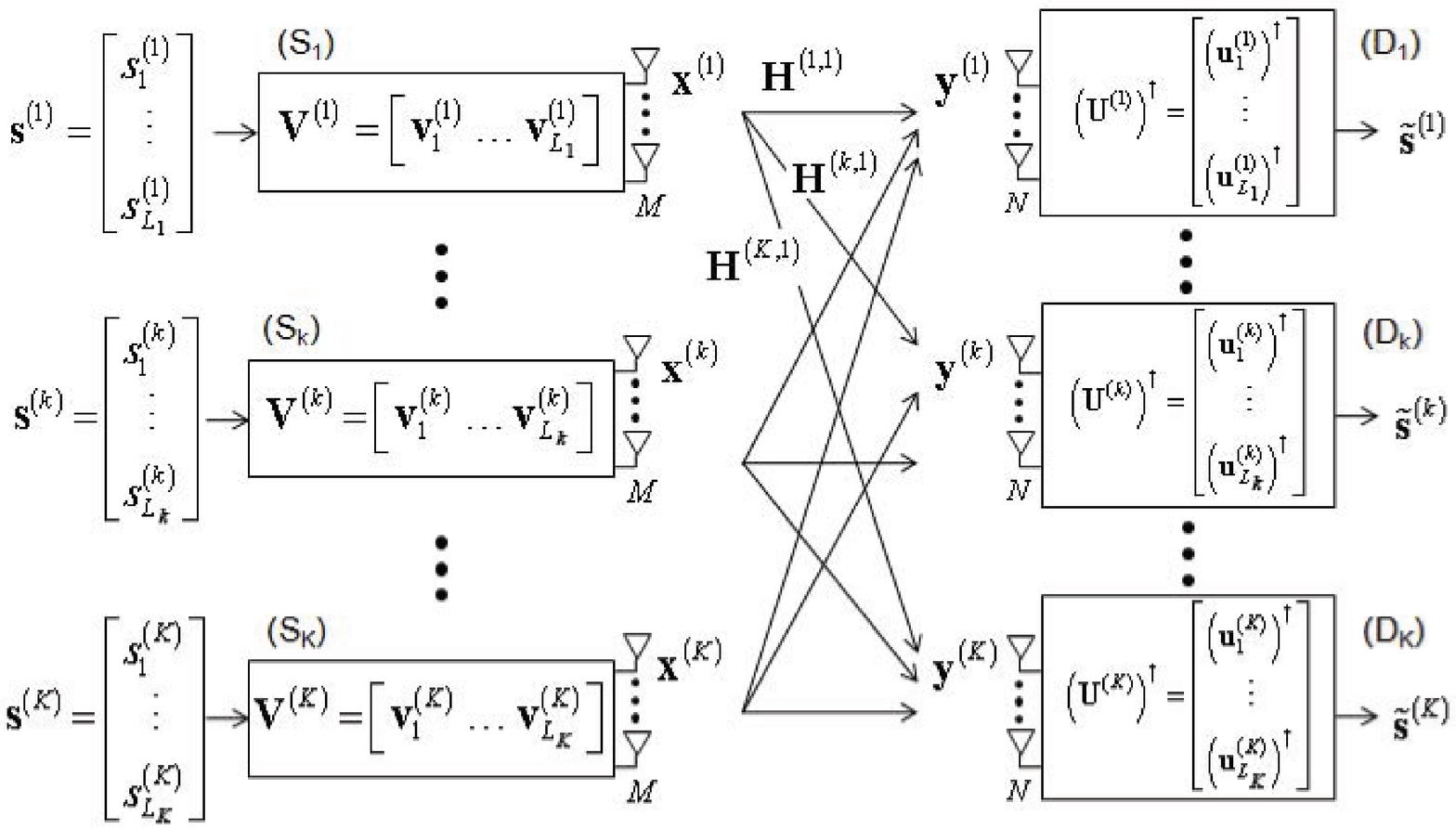}
\caption{System model. There are $K$ source-destination pairs where
each source node is equipped with $M$ antennas and each destination
node is equipped with $N$ antennas. The $k^{\textrm{th}}$
transmitter sends $L_k$ independent data streams to the desired
receiver.}\label{Fig:SystemModel}
\end{figure}

\section{System Model and Review of Prior Works} \label{Sec:System_model_prior_works}
\subsection{System Model} \label{Sec:System_Model}
We consider a MIMO interference channel consisting of $K$
source-destination pairs where each source node is equipped with $M$
antennas and each destination node is equipped with $N$ antennas as
shown in Fig.~\ref{Fig:SystemModel}. For ease of exposition, we
focus on the $k^\textrm{th}$ user referring to source node $S_k$ and
destination node $D_k$; nevertheless, the same model applies to all
other source-destination pairs. Specifically, $S_k$ transmits $L_k$
data streams $\textbf{s}^{(k)} = [ s_1^{(k)} \ldots s^{(k)}_{L_k}
]^T$ to $D_k$, which performs linear detection. The received signal
of $D_k$ is interfered by the transmitted signals of all other
users. To mitigate the impact of mutual interference, prior to
transmission $S_k$ precodes the data streams $\textbf{s}^{(k)}$
using the precoder matrix $\textbf{V}^{(k)} = [ \textbf{v}_1^{(k)}
\ldots \textbf{v}^{(k)}_{L_k} ] \in \mathbb{C}^{M \times L_k}$ and
$D_k$ decorrelates the received signal using the decorrelator matrix
$\textbf{U}^{(k)} = [ \textbf{u}_1^{(k)} \ldots
\textbf{u}^{(k)}_{L_k} ] \in \mathbb{C}^{N \times L_k}$. It follows
that the transmitted signal of $S_k$ is given by
\begin{equation}
\textbf{x}^{(k)} = \textbf{V}^{(k)} \textbf{s}^{(k)} =
\textstyle\sum_{l=1}^{L_k} \textbf{v}_l^{(k)}
s_l^{(k)},\label{Eqn:Tx_signal}
\end{equation}
the received signal of $D_k$ is given by
\begin{IEEEeqnarray*}{Rl}
\textbf{y}^{(k)}&= \textstyle\sum_{j=1}^K \textbf{H}^{(k,j)}
\textbf{x}^{(j)} + \textbf{n}^{(k)}\\
&= \textbf{H}^{(k,k)} \textbf{x}^{(k)} +
\underbrace{\textstyle\sum_{\substack{j=1\\j \neq k}}^K
\textbf{H}^{(k,j)} \textbf{x}^{(j)}}_{\textrm{interference}} +
\textbf{n}^{(k)},\IEEEyesnumber\label{Eqn:Rx_signal}
\end{IEEEeqnarray*}
and the decorrelator output of $D_k$ is given by
\begin{IEEEeqnarray*}{Rl}
\widetilde{\textbf{s}}^{(k)} &= ( \textbf{U}^{(k)} )^\dag
\textbf{y}^{(k)}\\
&= \underbrace{( \textbf{U}^{(k)} )^\dag \textbf{H}^{(k,k)}
\textbf{V}^{(k)} \textbf{s}^{(k)}}_{\textrm{desired signals}}
\IEEEyesnumber\label{Eqn:Decorrelator_output}\\
&+ \underbrace{\textstyle\sum_{\substack{j=1\\j \neq k}}^K (
\textbf{U}^{(k)} )^\dag \textbf{H}^{(k,j)} \textbf{V}^{(j)}
\textbf{s}^{(j)}}_{\textrm{leakage interference}} + (
\textbf{U}^{(k)} )^\dag \textbf{n}^{(k)},
\end{IEEEeqnarray*}
where $\textbf{H}^{(k,j)} \in \mathbb{C}^{N \times M}$ is the fading
channel from $S_j$ to $D_k$ and $\textbf{n}^{(k)} \sim \mathcal{CN}
( \textbf{0}_N, N_0 \textbf{I}_N )$ is the AWGN. As per
(\ref{Eqn:Tx_signal})--(\ref{Eqn:Decorrelator_output}), the estimate
of data stream $s_l^{(k)}$ is given by
\begin{IEEEeqnarray*}{Rl}
\widetilde{s}_l^{(k)} &= \underbrace{( \textbf{u}_l^{(k)} )^\dag
\textbf{H}^{(k,k)} \textbf{v}_l^{(k)} s_l^{(k)}}_{\textrm{desired
signal}} + \underbrace{\textstyle\sum_{\substack{m=1\\m \neq
l}}^{L_k} ( \textbf{u}_l^{(k)} )^\dag \textbf{H}^{(k,k)}
\textbf{v}_m^{(k)} s_m^{(k)}}_{\textrm{inter-stream
interference}}\\
&+ \underbrace{\textstyle\sum_{\substack{j=1\\j \neq k}}^K
\textstyle\sum_{m=1}^{L_j} ( \textbf{u}_l^{(k)} )^\dag
\textbf{H}^{(k,j)} \textbf{v}_m^{(j)} s_m^{(j)}}_{\textrm{leakage
interference}} + ( \textbf{u}_l^{(k)} )^\dag
\textbf{n}^{(k)},\;\;\;\;\IEEEyesnumber\label{Eqn:Signal_estimate_perfect_CSI}
\end{IEEEeqnarray*}
where the severity of the inter-stream and leakage interference
terms depend on the transceiver processing and CSI assumption.
Considering practical systems, we make the following assumptions
towards designing effective precoders and decorrelators.

\begin{assumption}[Transmit power constraint]\label{Assumption:Pwr}
We assume the data streams are independent and have unit power, i.e.
$\mathbb{E} [ ( \textbf{s}^{(k)} )^\dag \textbf{s}^{(k)} ] =
\textbf{I}_{L_k}$. Furthermore, we assume the maximum transmit power
of the $k^{\textrm{th}}$ source node is $P_k$ so the precoders shall
satisfy the power constraint $\mathbb{E} [ ( \textbf{x}^{(k)} )^\dag
\textbf{x}^{(k)} ] = \sum_{l=1}^{L_k} ( \textbf{v}_l^{(k)} )^\dag
\textbf{v}_l^{(k)} \leq P_k$.~ \hfill\IEEEQEDclosed
\end{assumption}

\begin{assumption}[Fading model]\label{Assumption:Fading}
We assume quasi-static fading so the fading channels
$\textbf{H}^{(k,j)}$ remain unchanged during a fading block. In
addition, we assume $\textrm{rank} ( \textbf{H}^{(k,j)} ) = \min (
M, N )$.~ \hfill\IEEEQEDclosed
\end{assumption}

\begin{assumption}[CSI model]\label{Assumption:CSI} We assume perfect
CSI is available at the receivers (i.e. perfect CSIR), and only
imperfect CSI is available at the transmitters (i.e. imperfect CSIT)
for designing the precoders and decorrelators. Specifically, we
model channel estimates at the transmitters as
\begin{equation}
\widehat{\textbf{H}}^{(k,j)} = \textbf{H}^{(k,j)} -
\mathbf{\Delta}^{(k,j)}, \forall j,k \in
\mathcal{K},\label{Eqn:Channel_estimates}
\end{equation}
where $\mathbf{\Delta}^{(k,j)}$ is the CSI error
\cite{Jnl:Worst-case_robust_MIMO:Jiaheng_Palomar,
Jnl:RobustQosBroadcastMiso:Davidson, Jnl:RobustQosP2PMimo:Palomar}.
Specifically, we assume $|| \mathbf{\Delta}^{(k,j)} ||^2 \leq
\varepsilon$, which implies that the actual channel
$\textbf{H}^{(k,j)}$ belongs to a spherical uncertainty region
centered at $\widehat{\textbf{H}}^{(k,j)}$ with radius
$\varepsilon$. For notational convenience, we denote $\mathcal{H} =
\{ \textbf{H}^{(k,j)} \}_{j,k=1}^K = \{
\widehat{\textbf{H}}^{(k,j)}\!\!+\!\mathbf{\Delta}^{(k,j)}
\}_{j,k=1}^K$ and $\mathcal{\widehat{H}} = \{
\widehat{\textbf{H}}^{(k,j)} \}_{j,k=1}^K$.~ \hfill\IEEEQEDclosed
\end{assumption}

\begin{remark}[Interpretation of the CSI error model]
The imperfect CSIT model (\ref{Eqn:Channel_estimates}) encapsulates
the following scenarios.
\begin{list}{\labelitemi}{\leftmargin=0.5em}
\item \emph{Quantized CSI in FDD Systems
\cite[Section~II\nobreakdash-B]{Jnl:RobustQosBroadcastMiso:Davidson}:}
For FDD systems, the transmitters are provided with quantized CSI
via feedback. Using uniform quantizers, the quantization cells in
the interior of the quantization region can be approximated by
spherical regions of radius equal to the quantization step size. As
a result, the imperfect CSIT model corresponds to quantized CSI
obtained using a uniform vector quantizer with quantization step
size $\sqrt{\varepsilon}$.

\item \emph{Estimated CSI in TDD Systems \cite[Section~IV\nobreakdash-A]{Jnl:RobustQosP2PMimo:Palomar}:}
For TDD systems, the transmitters can estimate the channels from the
sounding signals received in the reverse link. The imperfectness of
the CSIT in this case comes from the estimation noise as well as
delay. Using MMSE channel prediction, the CSI estimate
$\widehat{\textbf{H}}^{(k,j)}$ is unbiased, whereas the CSI error
$\mathbf{\Delta}^{(k,j)}$ is Gaussian distributed and independent
from the CSI estimate $\widehat{\textbf{H}}^{(k,j)}$. As a result,
$\mathbf{\Delta}^{(k,j)}$ is a jointly Gaussian matrix and $||
\mathbf{\Delta}^{(k,j)} ||^2 \leq \varepsilon$ corresponds to
``equal probability contour'' on the probability space of
$\mathbf{\Delta}^{(k,j)}$. In other words, the probability of the
event $|| \mathbf{\Delta}^{(k,j)} ||^2 \leq \varepsilon$ depends on
$\varepsilon$ only. Accordingly, we could find an $\varepsilon$ such
that $\textrm{Pr}[ || \mathbf{\Delta}^{(k,j)} ||^2 \leq \varepsilon
] = 0.99$ (for example).~ \hfill\IEEEQEDclosed
\end{list}
\end{remark}

By Assumptions~\ref{Assumption:Pwr}~to~\ref{Assumption:CSI}, the
data stream estimate $\widetilde{s}_l^{(k)}$ in
(\ref{Eqn:Signal_estimate_perfect_CSI}) can be \emph{equivalently}
expressed as
\begin{IEEEeqnarray*}{Rl}
\widetilde{s}_l^{(k)} &= ( \textbf{u}_l^{(k)} )^\dag (
\widehat{\textbf{H}}^{(k,k)}\!\!+\!\mathbf{\Delta}^{(k,k)} )
\textbf{v}_l^{(k)} s_l^{(k)}\\
+&\textstyle\sum_{\substack{m=1\\m \neq l}}^{L_k} (
\textbf{u}_l^{(k)} )^\dag (
\widehat{\textbf{H}}^{(k,k)}\!\!\!+\!\mathbf{\Delta}^{(k,k)} )
\textbf{v}_m^{(k)} s_m^{(k)}\IEEEyesnumber\label{Eqn:Signal_estimate_imperfect_CSI}\\
+&\textstyle\sum_{\substack{j=1\\j \neq k}}^K\!\sum_{m=1}^{L_j} (
\textbf{u}_l^{(k)} )^\dag (
\widehat{\textbf{H}}^{(k,j)}\!\!\!+\!\mathbf{\Delta}^{(k,j)} )
\textbf{v}_m^{(j)} s_m^{(j)}\!+\!( \textbf{u}_l^{(k)} )^\dag
\textbf{n}^{(k)}.
\end{IEEEeqnarray*}
The actual SINR of $\widetilde{s}_l^{(k)}$ at the $k^{\textrm{th}}$
receiver is given by (\ref{Eqn:Sinr_perfect}),
\begin{figure*}[!t]
\normalsize
\begin{IEEEeqnarray*}{l}
\gamma_l^{(k)} ( \mathcal{H}, \{ \{ \textbf{v}_m^{(j)}
\}_{m=1}^{L_j} \}_{j=1}^K, \textbf{u}_l^{(k)} ) \textstyle= \frac{||
( \textbf{u}_l^{(k)} )^\dag ( \widehat{\textbf{H}}^{(k,k)}\!+
\mathbf{\Delta}^{(k,k)} ) \textbf{v}_l^{(k)}
||^2}{\sum_{\substack{m=1\\m \neq l}}^{L_k} || ( \textbf{u}_l^{(k)}
)^\dag ( \widehat{\textbf{H}}^{(k,k)}\!+ \mathbf{\Delta}^{(k,k)} )
\textbf{v}_m^{(k)} ||^2 + \sum_{\substack{j=1\\j \neq k}}^K
\sum_{m=1}^{L_j} || ( \textbf{u}_l^{(k)} )^\dag (
\widehat{\textbf{H}}^{(k,j)}\!+ \mathbf{\Delta}^{(k,j)} )
\textbf{v}_m^{(j)} ||^2 + N_0 || \textbf{u}_l^{(k)}
||^2}\IEEEyesnumber\label{Eqn:Sinr_perfect}
\end{IEEEeqnarray*}
\hrulefill
\end{figure*}
whereby the instantaneous mutual information between data stream
$s_l^{(k)}$ and estimate $\widetilde{s}_l^{(k)}$ can be expressed as
\begin{IEEEeqnarray*}{l}
C_l^{(k)} ( \mathcal{H}, \{ \{ \textbf{v}_m^{(j)} \}_{m=1}^{L_j}
\}_{j=1}^K, \textbf{u}_l^{(k)} )\IEEEyesnumber\label{Eqn:Cap_perfect}\\
= \log_2( 1 + \gamma_l^{(k)} ( \mathcal{H}, \{ \{ \textbf{v}_m^{(j)}
\}_{m=1}^{L_j} \}_{j=1}^K, \textbf{u}_l^{(k)} ) ).
\end{IEEEeqnarray*}

\subsection{Review of Prominent Transceiver Designs for MIMO
Interference Channels}\label{Sec:Prior_works} In the following, we
review the motivations and issues of prominent transceiver designs
for MIMO interference channels in the literature.

\subsubsection{Interference Alignment in Quasi-Static MIMO Signal
Space} \label{Sec:Scheme_IA_MIMO} In \cite{Jnl:IA:Cadambe_Jafar,
Cnf:IA_feasibility:Yetis_Jafar} the authors exploited IA in
quasi-static MIMO signal space for precoder-decorrelator design.
Specifically, assuming perfect CSI, we could obtain precoders and
decorrelators that confine the interference on each destination node
to a lower dimension subspace, such that interference can be more
effectively removed. Note that IA is only feasible with sufficiently
large number of signaling dimensions. For the $K$-pair quasi-static
$N \times M$ MIMO interference channel, IA could achieve a DoF of $K
\frac{\min ( M, N )}{2}$ for $K \leq 3$ but might not be feasible
for $K > 3$. Moreover, IA is not optimal in general at medium SNR.
For example, consider the data stream estimate
$\widetilde{s}_l^{(k)}$ in
(\ref{Eqn:Signal_estimate_imperfect_CSI}); suppose IA is feasible
then
\begin{equation*}
( \textbf{u}_l^{(k)} )^\dag \widehat{\textbf{H}}^{(k,j)}
\textbf{v}_m^{(j)} = 0, j \neq k \textrm{ or } l \neq m,
\end{equation*}
and the actual SINR of the $l^{\textrm{th}}$ data stream at
$k^{\textrm{th}}$ receiver is given by
\begin{IEEEeqnarray*}{l}
\gamma_l^{(k)} ( \mathcal{H}, \{ \{ \textbf{v}_m^{(j)}
\}_{m=1}^{L_j} \}_{j=1}^K, \textbf{u}_l^{(k)} )\\
\textstyle= \frac{|| ( \textbf{u}_l^{(k)} )^\dag (
\widehat{\textbf{H}}^{(k,k)}\!+ \mathbf{\Delta}^{(k,k)} )
\textbf{v}_l^{(k)} ||^2}{\left(\substack{\sum_{\substack{m=1\\m \neq
l}}^{L_j} || ( \textbf{u}_l^{(k)} )^\dag \mathbf{\Delta}^{(k,k)}
\textbf{v}_m^{(k)} ||^2\\+ \sum_{\substack{j=1\\j \neq k}}^K
\sum_{m=1}^{L_j} || ( \textbf{u}_l^{(k)} )^\dag
\mathbf{\Delta}^{(k,j)} \textbf{v}_m^{(j)} ||^2 + N_0 ||
\textbf{u}_l^{(k)} ||^2}\right)}.\IEEEyesnumber\label{Eqn:Sinr_IA}
\end{IEEEeqnarray*}
As per (\ref{Eqn:Sinr_IA}), the presence of CSI error
$\mathbf{\Delta}^{(k,j)}$ creates persistent residual interference.
Even when the residual interference is negligible, i.e.
\begin{IEEEeqnarray*}{l}
\gamma_l^{(k)} ( \mathcal{H}, \{ \{ \textbf{v}_m^{(j)}
\}_{m=1}^{L_j} \}_{j=1}^K, \textbf{u}_l^{(k)} ) \textstyle\approx
\frac{|| ( \textbf{u}_l^{(k)} )^\dag (
\widehat{\textbf{H}}^{(k,k)}\!\!+\!\mathbf{\Delta}^{(k,k)} )
\textbf{v}_l^{(k)} ||^2}{N_0 || \textbf{u}_l^{(k)} ||^2},
\end{IEEEeqnarray*}
the conventional IA scheme \cite{Jnl:IA:Cadambe_Jafar,
Cnf:IA_feasibility:Yetis_Jafar} makes no attempt to optimize SINR
performance.

\subsubsection{Interference Alignment in Real Fading
Channels} \label{Sec:Scheme_IA_real} In
\cite{Misc:Real_IA_SISO:Motahari_Khandani,
Misc:Real_IA_MIMO:Ghasemi_Khandani} the authors consider IA along
the real line by creating fictitious signaling dimensions.
Specifically, assuming perfect CSI, we could design the leakage
interference terms at each destination node to have the same scaling
factor (or pseudo direction), such that interference can be
effectively removed. For example, consider the received signal in
(\ref{Eqn:Rx_signal}); for the purpose of illustration let $M=N=1$
and $H^{(k,j)} \in \mathbb{R}$ so
\begin{IEEEeqnarray*}{Rl}
y^{(k)} &\textstyle= H^{(k,k)} x^{(k)} + \sum_{\substack{j=1\\j \neq
k}}^K H^{(k,j)} x^{(j)} + n^{(k)}\\
&\textstyle\stackrel{(a)}{=} \sum_{l=1}^{L_j} (
\hat{H}^{(k,k)}\!+\!\Delta^{(k,k)} ) v_l^{(k)} s_l^{(k)}\\
&\textstyle+\sum_{\substack{j=1\\j \neq k}}^K \sum_{l=1}^{L_j}
\underbrace{( \hat{H}^{(k,j)}\!+\!\Delta^{(k,j)} ) v_l^{(j)}
s_l^{(j)}}_{\textrm{leakage interference}} +
n^{(k)},\label{Eqn:Rx_signal_real}
\end{IEEEeqnarray*}
where (a) follows from (\ref{Eqn:Tx_signal}) and
(\ref{Eqn:Channel_estimates}). To facilitate IA along the real line,
the data streams shall belong to the set of integers (i.e.
$s_l^{(k)} \in \mathbb{Z}$) and we shall choose the precoders such
that $\hat{H}^{(k,j)} v_l^{(j)} = \hat{H}^{(k,m)} v_l^{(m)}$ for $j
\neq m$. It is shown in \cite{Misc:Real_IA_SISO:Motahari_Khandani,
Misc:Real_IA_MIMO:Ghasemi_Khandani} that, if ideally CSI error is
negligible (i.e. $\Delta^{(k,j)} \approx 0$), this scheme could
theoretically achieve a DoF of $K \frac{MN}{M+N}$. However, this
scheme would require infinite SNR and cannot be implemented in
practice.

\subsubsection{Iterative Algorithms to Minimize Leakage
Interference / Maximize SINR} \label{Sec:Scheme_IA_greedy} In
\cite{Misc:Distributed_IA:Gomadam_Cadambe_Jafar,
Cnf:IA_alternating_minimization:Peters_Heath} the authors exploit
uplink-downlink duality and propose iterative algorithms for
precoder-decorrelator design. Specifically, the algorithms in
\cite[Algorithm~1]{Misc:Distributed_IA:Gomadam_Cadambe_Jafar},
\cite{Cnf:IA_alternating_minimization:Peters_Heath} are established
with the objective of sequentially minimizing the aggregate leakage
interference induced by each data stream, whereas the algorithm in
\cite[Algorithm~2]{Misc:Distributed_IA:Gomadam_Cadambe_Jafar} is
established with the objective to sequentially maximize the SINR of
each data stream. Note that the aforementioned algorithms neglect
the presence of CSI error, which could have significant performance
impacts. Moreover, these algorithms neglect individual user
performance and fairness. This is undesirable because for practical
systems it is important to ensure all users have satisfactory
performance.

\section{Problem Formulation: Robust Transceiver Design with
Fairness Considerations} \label{Sec:Problem_formulation} In this
section, we formulate a transceiver design for the $K$-pair
quasi-static MIMO interference channel that is robust against CSI
uncertainties and with the objective of enforcing fairness among all
users' data streams. Specifically, to provide the best resilience
against CSI error, we adopt a worst-case design approach. On the
other hand, the fairness aspect is motivated by the practical system
consideration to ensure all users in the network can have
satisfactory performance. As such, we formulate the
precoder-decorrelator design with imperfect CSIT as an optimization
problem to maximize the \emph{worst-case} SINR among all users' data
streams, subject to the maximum transmit power per source node.

\subsection{Optimization Problem}
The robust and fair transceiver optimization problem for the
$K$-pair $N \times M$ MIMO interference channel consists of the
following components.
\begin{list}{\labelitemi}{\leftmargin=0.5em}
\item \textbf{Optimization Variables}: The optimization variables
include the set of precoders $\{ \{ \textbf{v}_m^{(j)}
\}_{m=1}^{L_j} \}_{j=1}^K$ and the set of decorrelators $\{ \{
\textbf{u}_m^{(j)} \}_{m=1}^{L_j} \}_{j=1}^K$. These variables are
adaptive with respect to imperfect CSIT $\widehat{\mathcal{H}} = \{
\widehat{\textbf{H}}^{(k,j)} \}_{j,k=1}^K$.

\item \textbf{Optimization Objective}: The optimization objective is to
maximize, with imperfect CSIT, the minimum worst-case SINR
among\footnote{Note that (\ref{Eqn:Worst-case_SINR_Tx}) is the
worst-case SINR perceived by the transmitter based on imperfect CSIT
$\widehat{\mathcal{H}} = \{ \widehat{\textbf{H}}^{(k,j)}
\}_{j,k=1}^K$. We choose the worst-case SINR perceived by the
transmitter in order to incorporate robustness against CSI error
$\mathbf{\Delta}^{(k,j)}$.} all users' data streams (perceived by
the transmitter) given by (cf. (\ref{Eqn:Sinr_perfect}) and
Assumption~\ref{Assumption:CSI})
\end{list}
\begin{equation}
\min_{\substack{\forall k \in \mathcal{K}\\\forall l \in
\mathcal{L}_k}} \min_{|| \mathbf{\Delta}^{(k,j)} ||^2 \leq
\varepsilon} \gamma_l^{(k)} ( \mathcal{H}, \{ \{ \textbf{v}_m^{(j)}
\}_{m=1}^{L_j} \}_{j=1}^K, \textbf{u}_l^{(k)}
).\label{Eqn:Worst-case_SINR_Tx}
\end{equation}
\begin{list}{\labelitemi}{\leftmargin=0.5em}
\item \textbf{Optimization Constraints}: The optimization constraints
are the maximum transmit power for each source node $P_1, \ldots,
P_K$, which give the precoder power constraints $\sum_{l=1}^{L_k} (
\textbf{v}_l^{(k)} )^\dag \textbf{v}_l^{(k)} \leq P_k, \forall k \in
\mathcal{K}$ (cf. Assumption~\ref{Assumption:Pwr}).
\end{list}

Accordingly, the optimization problem can be formally written as
Problem~1.\\
\indent\emph{Problem 1: (Robust Max-Min Fair Precoder-Decorrelator
Design):}
\begin{IEEEeqnarray*}{cl}
\IEEEeqnarraymulticol{2}{l}{ \{ \{ \{ ( \textbf{v}_m^{(j)} )^\star
\}_{m=1}^{L_j} \}_{j=1}^K, \{ \{ ( \textbf{u}_m^{(j)} )^\star
\}_{m=1}^{L_j} \}_{j=1}^K \} = \mathcal{P}( P_1, \ldots, P_K )}\\
\arg\!\!\!\!\!\!\max_{\substack{\textbf{v}_m^{(j)} \in
\mathbb{C}^{M\!\times\!1}\\ \textbf{u}_m^{(j)} \in
\mathbb{C}^{N\!\times\!1}}} \!\min_{\substack{\forall k \in
\mathcal{K}\\\forall l \in \mathcal{L}_k}}
\!\min_{||\!\mathbf{\Delta}^{(k,j)}\!||^2 \leq
\varepsilon}&\gamma_l^{(k)}\!( \mathcal{H}, \{ \{ \textbf{v}_m^{(j)}
\}_{m=1}^{L_j} \}_{j=1}^K,
\textbf{u}_l^{(k)}\!)\IEEEyessubnumber\label{Eqn:Problem_P0_cost}\\
\textrm{s. t.}&\textstyle\sum_{l=1}^{L_k} ( \textbf{v}_l^{(k)}
)^\dag \textbf{v}_l^{(k)}\!\!\leq\!\!P_k, \forall
k\!\in\!\mathcal{K}.\IEEEyessubnumber\label{Eqn:Problem_P0_TxPwr_constraint}
\end{IEEEeqnarray*}

In (\ref{Eqn:Problem_P0_cost}), the worst-case SINR with imperfect
CSIT is given by the following proposition.
\begin{proposition}[Worst-Case SINR with Imperfect CSIT]
\label{Proposition:Worst-case_SINR} Given CSI estimates
$\widehat{\mathcal{H}} = \{ \widehat{\textbf{H}}^{(k,j)}
\}_{j,k=1}^K$ at the transmitter with error $||
\mathbf{\Delta}^{(k,j)} ||^2 \leq \varepsilon$, the worst-case SINR
of data stream estimate $\widetilde{s}_l^{(k)}$ perceived by the
transmitter can be expressed as (\ref{Eqn:Worst-case_SINR}).
\begin{figure*}[!t]
\normalsize
\begin{equation}
\begin{array}{l}
\displaystyle \min_{|| \mathbf{\Delta}^{(k,j)} ||^2 \leq
\varepsilon} \textstyle \gamma_l^{(k)} ( \mathcal{H}, \{ \{
\textbf{v}_m^{(j)} \}_{m=1}^{L_j} \}_{j=1}^K, \textbf{u}_l^{(k)}
)\\
= \frac{|| ( \textbf{u}_l^{(k)} )^\dag
\widehat{\textbf{H}}^{(k,k)} \textbf{v}_l^{(k)} ||^2 - \varepsilon
|| \textbf{u}_l^{(k)} ||^2 || \textbf{v}_l^{(k)} ||^2}{\sum_{j=1}^K
\sum_{m=1}^{L_j} || ( \textbf{u}_l^{(k)} )^\dag
\widehat{\textbf{H}}^{(k,j)} \textbf{v}_m^{(j)} ||^2 + \varepsilon
|| \textbf{u}_l^{(k)} ||^2 \sum_{j=1}^K \sum_{m=1}^{L_j} ||
\textbf{v}_m^{(j)} ||^2 - || ( \textbf{u}_l^{(k)} )^\dag
\widehat{\textbf{H}}^{(k,k)} \textbf{v}_l^{(k)} ||^2 - \varepsilon
|| \textbf{u}_l^{(k)} ||^2 ||
\textbf{v}_l^{(k)} ||^2 + N_0 || \textbf{u}_l^{(k)} ||^2}\\
\triangleq \widetilde{\gamma}_l^{(k)} ( \widehat{\mathcal{H}}, \{ \{
\textbf{v}_m^{(j)} \}_{m=1}^{L_j} \}_{j=1}^K, \textbf{u}_l^{(k)}
).\label{Eqn:Worst-case_SINR}
\end{array}
\end{equation}
\hrulefill
\end{figure*}
\end{proposition}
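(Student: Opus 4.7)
My plan is to compute the worst-case SINR by decomposing the minimization over the $K^{2}$ error matrices $\{\mathbf{\Delta}^{(k,j)}\}$ into per-term bounds on quadratic forms, then reassembling. The key tool is the Frobenius Cauchy--Schwarz identity $\mathbf{u}^{\dag}\mathbf{\Delta}\mathbf{v}=\langle\mathbf{\Delta},\mathbf{u}\mathbf{v}^{\dag}\rangle_{F}$, which gives $|\mathbf{u}^{\dag}\mathbf{\Delta}\mathbf{v}|\leq\|\mathbf{\Delta}\|_{F}\|\mathbf{u}\|\|\mathbf{v}\|\leq\sqrt{\varepsilon}\,\|\mathbf{u}\|\|\mathbf{v}\|$, with equality attained by choosing $\mathbf{\Delta}$ proportional to the rank-one matrix $\mathbf{u}\mathbf{v}^{\dag}$.

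First I would isolate the dependence on each $\mathbf{\Delta}^{(k,j)}$ in the expression (\ref{Eqn:Sinr_perfect}). For $j\neq k$ the error $\mathbf{\Delta}^{(k,j)}$ appears only in the leakage-interference sum $\sum_{m}\|(\mathbf{u}_{l}^{(k)})^{\dag}(\widehat{\mathbf{H}}^{(k,j)}+\mathbf{\Delta}^{(k,j)})\mathbf{v}_{m}^{(j)}\|^{2}$ in the denominator; maximization of this sum is therefore decoupled from the remaining variables. For $j=k$ the error $\mathbf{\Delta}^{(k,k)}$ simultaneously enters the desired-signal numerator (to be minimized) and the intra-stream interference (to be maximized), so its treatment requires more care.

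Next I would derive the per-term bound: expanding $\|(\mathbf{u}_{l}^{(k)})^{\dag}(\widehat{\mathbf{H}}^{(k,j)}+\mathbf{\Delta}^{(k,j)})\mathbf{v}_{m}^{(j)}\|^{2}$ and invoking the Cauchy--Schwarz bound above, I would argue that each such squared magnitude may be replaced by $\|(\mathbf{u}_{l}^{(k)})^{\dag}\widehat{\mathbf{H}}^{(k,j)}\mathbf{v}_{m}^{(j)}\|^{2}\pm\varepsilon\|\mathbf{u}_{l}^{(k)}\|^{2}\|\mathbf{v}_{m}^{(j)}\|^{2}$, where the minus sign is taken in the signal term $(j,m)=(k,l)$ and the plus sign in every interference term. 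Summing the plus-sign version over all $(j,m)$ yields a compact double sum, and subtracting the signal contribution plus adding the noise term $N_{0}\|\mathbf{u}_{l}^{(k)}\|^{2}$ produces the denominator of (\ref{Eqn:Worst-case_SINR}). The bookkeeping ``full double sum minus the $(k,l)$ entry'' is exactly the form written in the proposition.

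The main obstacle will be the joint treatment of $\mathbf{\Delta}^{(k,k)}$, since the desired-signal numerator and intra-stream interference denominator share the same perturbation and therefore pull the optimal perturbation direction in opposite ways. I would discharge this by showing that the resulting constrained quadratic-ratio program admits a rank-one worst-case $\mathbf{\Delta}^{(k,k)}$ aligned with $\mathbf{u}_{l}^{(k)}(\mathbf{v}_{l}^{(k)})^{\dag}$ (so that the numerator Cauchy--Schwarz is saturated), and invoking the same alignment argument or a KKT/S-procedure verification to confirm that the interference-side upper bounds remain valid at this extremum. The leakage-interference maximization for $j\neq k$ is then a routine application of the same Cauchy--Schwarz inequality to the sum $\sum_{m}\|(\mathbf{u}_{l}^{(k)})^{\dag}\mathbf{\Delta}^{(k,j)}\mathbf{v}_{m}^{(j)}\|^{2}\leq\varepsilon\|\mathbf{u}_{l}^{(k)}\|^{2}\sum_{m}\|\mathbf{v}_{m}^{(j)}\|^{2}$, and combining all pieces with $N_{0}\|\mathbf{u}_{l}^{(k)}\|^{2}$ gives precisely (\ref{Eqn:Worst-case_SINR}).
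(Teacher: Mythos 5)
Your core machinery coincides with the paper's. The paper's Appendix A likewise isolates each $\mathbf{\Delta}^{(k,j)}$, bounds the perturbed signal term from below and every perturbed interference term from above, and then bounds the cross term as $\| (\mathbf{u}_l^{(k)})^\dag \mathbf{\Delta}^{(k,j)} \mathbf{v}_m^{(j)} \|^2 \le \varepsilon \|\mathbf{u}_l^{(k)}\|^2 \|\mathbf{v}_m^{(j)}\|^2$ using $\mathrm{Tr}(\mathbf{A}\mathbf{B})=\mathrm{Tr}(\mathbf{B}\mathbf{A})$ and $\mathrm{Tr}(\mathbf{C}\mathbf{D})\le\mathrm{Tr}(\mathbf{C})\mathrm{Tr}(\mathbf{D})$ for positive semi-definite matrices --- exactly your Frobenius Cauchy--Schwarz bound in trace clothing --- and the same ``full double sum minus the $(k,l)$ entry'' bookkeeping. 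So for the inequality direction you are reproducing the paper's argument.

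The trouble is in the extra achievability step you promise, and it cannot be completed as described. The choice $\mathbf{\Delta}^{(k,k)} \propto -\mathbf{u}_l^{(k)}(\mathbf{v}_l^{(k)})^\dag$ that saturates the signal-term lower bound gives $|(\mathbf{u}_l^{(k)})^\dag \mathbf{\Delta}^{(k,k)} \mathbf{v}_m^{(k)}| = \sqrt{\varepsilon}\,\|\mathbf{u}_l^{(k)}\|\,|(\mathbf{v}_l^{(k)})^\dag \mathbf{v}_m^{(k)}|/\|\mathbf{v}_l^{(k)}\|$ for the intra-stream terms, which is \emph{strictly below} the bound $\sqrt{\varepsilon}\,\|\mathbf{u}_l^{(k)}\|\,\|\mathbf{v}_m^{(k)}\|$ unless $\mathbf{v}_m^{(k)}$ is parallel to $\mathbf{v}_l^{(k)}$; the signal-minimizing and interference-maximizing alignments of the single matrix $\mathbf{\Delta}^{(k,k)}$ are incompatible, so no KKT/S-procedure argument will certify simultaneous saturation. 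Consequently (\ref{Eqn:Worst-case_SINR}) is only a lower bound on the worst-case SINR, not the exact minimum; the paper tacitly concedes this by writing its proof as a chain of ``$\ge$'' and then \emph{defining} $\widetilde{\gamma}_l^{(k)}$ to be the worst-case SINR. A second, shared weak point: replacing $|a+b|^2$ by $|a|^2-|b|^2$ (signal term) or $|a|^2+|b|^2$ (interference terms) does not follow from Cauchy--Schwarz or the triangle inequality --- e.g.\ $a=1$, $b=-\tfrac{1}{2}$ gives $|a+b|^2=\tfrac{1}{4}<\tfrac{3}{4}=|a|^2-|b|^2$, and $a=b=1$ gives $|a+b|^2=4>2=|a|^2+|b|^2$. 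The correct per-term bounds are $(|a|-|b|)^2$ and $(|a|+|b|)^2$, which carry cross terms of the form $2\sqrt{\varepsilon}\,\|\mathbf{u}_l^{(k)}\|\,\|\mathbf{v}_m^{(j)}\|\,|a|$ that the stated expression drops. The paper makes the same leap, but if you intend a rigorous derivation you must either retain those cross terms or justify their removal separately.
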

\begin{proof}
Please refer to Appendix~\ref{Proof:Worst-case_SINR}.
\end{proof}

Using Proposition~\ref{Proposition:Worst-case_SINR} and let
$\widetilde{P} = \min( P_1, \ldots, P_K )$ and $\rho_k = P_k /
\widetilde{P}$, we can recast Problem~$\mathcal{P}$ as
\begin{IEEEeqnarray*}{cl}
\IEEEeqnarraymulticol{2}{l}{ \{ \gamma^\star, \{ \{ (
\textbf{v}_m^{(j)} )^\star \}_{m=1}^{L_j} \}_{j=1}^K, \{ \{ (
\textbf{u}_m^{(j)} )^\star \}_{m=1}^{L_j}
\}_{j=1}^K \} = \mathcal{P}( \widetilde{P} )}\;\;\;\;\;\;\;\;\\
\min_{\substack{\textbf{v}_m^{(\!j\!)}\!\in \mathbb{C}^{\!M\!\times\!1}\\
\textbf{u}_m^{(\!j\!)}\!\in \mathbb{C}^{\!N\!\times\!1}\\\gamma \in
\mathbb{R}_+}}\!\!&\;\;-\gamma\IEEEyessubnumber\label{Eqn:Problem_P_cost}\\
\textrm{s. t.}&\widetilde{\gamma}_l^{(k)}\!(
\widehat{\mathcal{H}},\!\{ \{ \textbf{v}_m^{(j)} \}_{m=1}^{L_j}
\}_{j=1}^K,\!\textbf{u}_l^{(k)}\!)\!\ge\!\gamma,\!\forall
l\!\in\!\mathcal{L}_k,\!\forall
k\!\in\!\mathcal{K},\IEEEyessubnumber\label{Eqn:Problem_P_QoS_constraint}\\
&\textstyle\sum_{l=1}^{L_k} ( \textbf{v}_l^{(k)} )^\dag
\textbf{v}_l^{(k)} \leq \rho_k \widetilde{P},\forall k \in
\mathcal{K}.\IEEEyessubnumber\label{Eqn:Problem_P_TxPwr_constraint}
\end{IEEEeqnarray*}

\subsection{Properties of the Optimization Problem}
Note that it is not trivial to solve Problem~$\mathcal{P}$ since it
is non-convex and NP-hard in general as we elaborate below. In
Section~\ref{Sec:Iterative_soln}, we shall propose a low complexity
iterative algorithm for solving Problem~$\mathcal{P}$.

\subsubsection{Problem~$\mathcal{P}$ is a non-convex problem}
\label{Sec:Non-convex} The minimum SINR constraints in
(\ref{Eqn:Problem_P_QoS_constraint}) can be rearranged as
\begin{IEEEeqnarray*}{l}
\textstyle( 1 + \gamma ) || ( \textbf{u}_l^{(k)} )^\dag
\widehat{\textbf{H}}^{(k,k)} \textbf{v}_l^{(k)} ||^2 + ( \gamma - 1
) \varepsilon || \textbf{u}_l^{(k)} ||^2 || \textbf{v}_l^{(k)}
||^2\\
\textstyle- \gamma N_0 || \textbf{u}_l^{(k)} ||^2 - \gamma
\sum_{j=1}^K \sum_{m=1}^{L_j} || ( \textbf{u}_l^{(k)} )^\dag
\widehat{\textbf{H}}^{(k,j)} \textbf{v}_m^{(j)} ||^2\\
\textstyle- \gamma \varepsilon || \textbf{u}_l^{(k)} ||^2
\sum_{j=1}^K \sum_{m=1}^{L_j} || \textbf{v}_m^{(j)} ||^2 \geq
0,\label{Eqn:Problem_P_Qos_rearrange}
\end{IEEEeqnarray*}
which are non-convex inequalities consisting of non-positive linear
combinations of norms. Therefore, Problem~$\mathcal{P}$ is a
non-convex problem.

\begin{figure}[t]
\centering
\includegraphics[width = 3.5in]{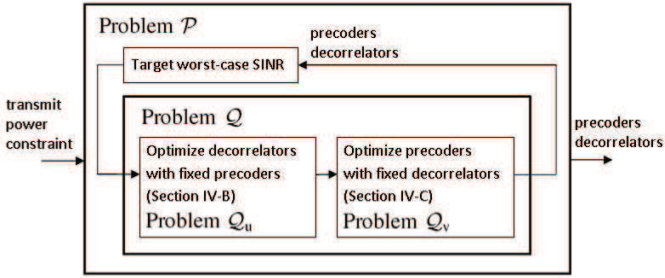}
\caption{Interrelationship among the optimization
problems.}\label{Fig:RelationshipAmongProblems}
\end{figure}

\subsubsection{Problem~$\mathcal{P}$ is NP-hard in general}
\label{Sec:NP-hard} To illustrate that Problem~$\mathcal{P}$ is
NP-hard in general, we consider the \emph{inverse problem} of
jointly minimizing the transmit powers of all source nodes subject
to a minimum SINR constraint for all users' data
streams\footnote{Please refer to
\cite{Jnl:Linear_precoding_conic:Eldar,
Jnl:Rank_constrained_separable_SDP:Yongwei_Palomar,
Jnl:Downlink_beamforming:Ottersten,
Jnl:MBS-SDMA_using_SDR_with_perfect_CSI:Sidiropoulos_Luo} and
references therein for discussions on the inverse relationship
between max-min fair and minimum power precoder design problems for
MISO \emph{broadcast} and \emph{multicast} channels.}. In
Section~\ref{Sec:Iterative_soln}, we shall propose an algorithm for
solving Problem~$\mathcal{P}$ \emph{facilitated} by solving the
inverse problem\footnote{The inverse problem will be utilized in
Section~\ref{Sec:Optimize_precoders}.} that consists of the
following components.
\begin{list}{\labelitemi}{\leftmargin=0.5em}
\item \textbf{Optimization Variables}: The optimization variables
include the set of precoders $\{ \{ \textbf{v}_m^{(j)}
\}_{m=1}^{L_j} \}_{j=1}^K$ and the set of decorrelators $\{ \{
\textbf{u}_m^{(j)} \}_{m=1}^{L_j} \}_{j=1}^K$.

\item \textbf{Optimization Objective}: The optimization
objective is to minimize the required transmit power of all source
nodes, by means of minimizing the precoder powers $\displaystyle
\textstyle\sum_{l=1}^{L_k} ( \textbf{v}_l^{(k)} )^\dag
\textbf{v}_l^{(k)}$, $\forall k \in \mathcal{K}$.

\item \textbf{Optimization Constraints}: The optimization constraint
is for all users' data streams to meet the prescribed minimum SINR
$\gamma$, i.e. $\widetilde{\gamma}_l^{(k)} ( \widehat{\mathcal{H}},
\{ \{ \textbf{v}_m^{(j)} \}_{m=1}^{L_j} \}_{j=1}^K,
\textbf{u}_l^{(k)} ) \ge \gamma$.
\end{list}

Accordingly, the inverse problem can be formally written as
Problem~2.\\
\indent\emph{Problem~2 (Power Minimization Precoder-Decorrelator
Design):}
\begin{IEEEeqnarray*}{cl}
\IEEEeqnarraymulticol{2}{l}{ \{ \beta^\star, \{ \{ (
\textbf{v}_m^{(j)} )^\star \}_{m=1}^{L_j} \}_{j=1}^K, \{ \{ (
\textbf{u}_m^{(j)} )^\star \}_{m=1}^{L_j}
\}_{j=1}^K \} = \mathcal{Q}( \gamma )}\;\;\;\;\;\;\;\;\\
\min_{\substack{\textbf{v}_m^{(\!j\!)}\!\in \mathbb{C}^{\!M\!\times\!1}\\
\textbf{u}_m^{(\!j\!)}\!\in \mathbb{C}^{\!N\!\times\!1}\\\beta \in
\mathbb{R}_+}}\!\!&\;\;\beta\IEEEyessubnumber\label{Eqn:Problem_Q_cost}\\
\textrm{s. t.}&\textstyle\sum_{l=1}^{L_k} ( \textbf{v}_l^{(k)}
)^\dag \textbf{v}_l^{(k)} \leq \rho_k \beta,\;\;\forall k \in
\mathcal{K},\IEEEyessubnumber\label{Eqn:Problem_Q_TxPwr_constraint}\\
&\widetilde{\gamma}_l^{(k)}\!( \widehat{\mathcal{H}},\!\{ \{
\textbf{v}_m^{(j)} \}_{m=1}^{L_j}
\}_{j=1}^K,\!\textbf{u}_l^{(k)}\!)\!\ge\!\gamma,\!\forall
l\!\in\!\mathcal{L}_k,\!\forall
k\!\in\!\mathcal{K}.\IEEEyessubnumber\label{Eqn:Problem_Q_QoS_constraint}
\end{IEEEeqnarray*}

Consider an instance of Problem~$\mathcal{Q}$ with minimum SINR
constraint $\widetilde{\gamma}$, i.e.
\begin{equation}
\{ \widetilde{\beta}, \{ \{ \widetilde{\textbf{v}}_m^{(j)}
\}_{m=1}^{L_j} \}_{j=1}^K, \{ \{ \widetilde{\textbf{u}}_m^{(j)}
\}_{m=1}^{L_j} \}_{j=1}^K \} = \mathcal{Q}( \widetilde{\gamma}
),\label{Eqn:Q_soln}
\end{equation}
and the required transmit power of the $k^{\textrm{th}}$ source node
is $\rho_k \widetilde{\beta}$. It can be shown that
\begin{equation}
\{ \widetilde{\gamma}, \{ \{ \widetilde{\textbf{v}}_m^{(j)}
\}_{m=1}^{L_j} \}_{j=1}^K, \{ \{ \widetilde{\textbf{u}}_m^{(j)}
\}_{m=1}^{L_j} \}_{j=1}^K \} = \mathcal{P}( \widetilde{\beta}
)\label{Eqn:P_Q_soln}
\end{equation}
so we can solve Problem~$\mathcal{Q}$ to obtain a corresponding
solution for Problem~$\mathcal{P}$, and vice-versa. Since
Problem~$\mathcal{Q}$ is NP-hard in general, Problem~$\mathcal{P}$
is also NP-hard. Specifically, we define the \emph{special case} of
Problem~$\mathcal{Q}$ with \emph{fixed} decorrelators as\\
\indent\emph{Problem~3 (Power Minimization Precoder Design
with Fixed Decorrelators):}
\begin{IEEEeqnarray*}{cl}
\IEEEeqnarraymulticol{2}{l}{ \{ \xi^\star, \{ \{ (
\textbf{v}_m^{(j)} )^\star \}_{m=1}^{L_j} \}_{j=1}^K \} =
\mathcal{Q}_\textrm{v}( \gamma, \{ \{
\textbf{u}_m^{(j)} \}_{m=1}^{L_j} \}_{j=1}^K )}\;\;\;\;\;\;\;\;\;\;\;\;\\
\min_{\substack{\textbf{v}_m^{(\!j\!)}\!\in \mathbb{C}^{\!M\!\times\!1}\\
\xi \in
\mathbb{R}_+}}\!\!&\;\;\xi\IEEEyessubnumber\label{Eqn:Problem_Q_V_cost}\\
\textrm{s. t.}&\textstyle\sum_{l=1}^{L_k} ( \textbf{v}_l^{(k)}
)^\dag \textbf{v}_l^{(k)} \leq \rho_k \xi, \forall k \in
\mathcal{K},\IEEEyessubnumber\label{Eqn:Problem_Q_V_TxPwr_constraint}\\
&\widetilde{\gamma}_l^{(k)}\!( \widehat{\mathcal{H}},\!\{ \{
\textbf{v}_m^{(j)} \}_{m=1}^{L_j}
\}_{j=1}^K,\!\textbf{u}_l^{(k)}\!)\!\ge\!\gamma,\!\forall
l\!\in\!\mathcal{L}_k,\!\forall
k\!\in\!\mathcal{K}.\IEEEyessubnumber\label{Eqn:Problem_Q_V_QoS_constraint}
\end{IEEEeqnarray*}
Note that Problem~$\mathcal{Q}_\textrm{v}$ belongs to the class of
separable homogenous QCQP, which is NP-hard in general
\cite{Bok:Palomar,Bok:Palomar2}. This implies that
Problem~$\mathcal{Q}$, which contains
Problem~$\mathcal{Q}_\textrm{v}$ as special case, is also NP-hard in
general\footnote{Problem~$\mathcal{Q}_\textrm{v}$ will be utilized
in Section~\ref{Sec:Optimize_precoders}.}.

\section{Low Complexity Iterative Solution} \label{Sec:Iterative_soln}
In this section, we propose a low complexity iterative algorithm for
solving the robust and fair transceiver optimization problem
$\mathcal{P}$. In particular, the proposed algorithm is facilitated
by solving the inverse Problem~$\mathcal{Q}$, whereby we exploit the
structure of Problem~$\mathcal{Q}$ to apply effective optimization
techniques.

\subsection{Overview of Algorithm} \label{Sec:Overview_algorithm}
The proposed algorithm for solving Problem~$\mathcal{P}$ is
facilitated by solving Problem~$\mathcal{Q}$ as illustrated in
Fig.~\ref{Fig:RelationshipAmongProblems}, which is also detailed in
Algorithm~\ref{Algorithm:Top-level}. Specifically, we iteratively
refine the decorrelators and precoders to monotonically improve the
minimum SINR. Each iteration consists of two stages:
\begin{list}{\labelitemi}{\leftmargin=0.5em}
\item (Steps 1-3 of Algorithm~\ref{Algorithm:Top-level}) First,
given the \emph{status quo} minimum SINR $\widetilde{\gamma}$
achieved with the $k^{\textrm{th}}$ source node transmitting at
power $P_k$, we solve Problem~$\mathcal{Q}$ to optimize the
precoders and decorrelators for minimizing the transmit powers, i.e.
\begin{equation}
\{ \widetilde{\beta}, \{ \{ \widetilde{\textbf{v}}_m^{(j)}
\}_{m=1}^{L_j} \}_{j=1}^K, \{ \{ \widetilde{\textbf{u}}_m^{(j)}
\}_{m=1}^{L_j} \}_{j=1}^K \} = \mathcal{Q}( \widetilde{\gamma}
),\label{Eqn:Target_SINR}
\end{equation}
such that the minimum SINR $\widetilde{\gamma}$ is achieved with the
$k^{\textrm{th}}$ source node transmitting at a \emph{reduced} power
of $\sum_{l=1}^{L_k} ( \widetilde{\textbf{v}}_l^{(k)} )^\dag
\widetilde{\textbf{v}}_l^{(k)} = \rho_k \widetilde{\beta} \leq P_k$.
\item (Steps 4-5 of Algorithm~\ref{Algorithm:Top-level}) Second,
we improve the minimum SINR by up-scaling the transmit precoding
power\footnote{We show in (\ref{Eqn:Convergence2}) that up-scaling
the precoding powers improve the minimum SINR.} of the
$k^{\textrm{th}}$ user to the power constraint $P_k$, i.e.
$\widetilde{\textbf{v}}_m^{(j)} = \sqrt{P_j / (\rho_j
\widetilde{\beta})} \widetilde{\textbf{v}}_m^{(j)}$.
\end{list}
We repeat the iteration until the minimum SINR converges to a
maximum. However, it is not trivial to solve the iteration step as
per (\ref{Eqn:Target_SINR}) since Problem~$\mathcal{Q}$ is NP-hard
in general as shown in Section~\ref{Sec:NP-hard}. As such, we shall
solve Problem~$\mathcal{Q}$ based on alternative optimization
between the decorrelators and the precoders, i.e. we present the
algorithm for optimizing the decorrelators with \emph{fixed}
precoders in Section~\ref{Sec:Optimize_decorrelators}, and introduce
the algorithm for optimizing the precoders with \emph{fixed}
decorrelators in Section~\ref{Sec:Optimize_precoders}. The top-level
detail steps of the optimization algorithm is summarized below
(Algorithm~\ref{Algorithm:Top-level}) and illustrated in
Fig~\ref{Fig:RelationshipAmongAlg}. The convergence proof for
Algorithm~\ref{Algorithm:Top-level} is provided in
Appendix~\ref{Proof:Convergence}.
\begin{algorithm}[Top-Level Algorithm]
\label{Algorithm:Top-level}~\\
\textbf{Inputs}: maximum transmit power for each
source node $P_1, \ldots, P_K$\\
\textbf{Outputs}: precoders $\{ \{ ( \textbf{v}_m^{(j)} )^\star
\}_{m=1}^{L_j} \}_{j=1}^K$ and decorrelators $\{ \{ (
\textbf{u}_m^{(j)} )^\star \}_{m=1}^{L_j} \}_{j=1}^K$
\begin{list}{\labelitemi}{\leftmargin=0.5em}
\item \textbf{Step 0}: Initialize decorrelators $\{ \{
\widetilde{\textbf{u}}_m^{(j)} \}_{m=1}^{L_j} \}_{j=1}^K$ and
precoders $\{ \{ \widetilde{\textbf{v}}_m^{(j)} \}_{m=1}^{L_j}
\}_{j=1}^K$, where the transmit power for the $j^{\textrm{th}}$
source node is $\sum_{m=1}^{L_j} ( \widetilde{\textbf{v}}_m^{(j)}
)^\dag \widetilde{\textbf{v}}_m^{(j)} = P_j$.
\end{list}
\textbf{Repeat}
\begin{list}{\labelitemi}{\leftmargin=0.5em}
\item \textbf{Step 1}: Optimize the decorrelators with fixed
precoders (cf. Section~\ref{Sec:Optimize_decorrelators})
\begin{equation*}
\{ \{ \widetilde{\textbf{u}}_m^{(j)} \}_{m=1}^{L_j} \}_{j=1}^K =
\mathcal{Q}_\textrm{u}( \{ \{ \widetilde{\textbf{v}}_m^{(j)}
\}_{m=1}^{L_j} \}_{j=1}^K ).
\end{equation*}
Update the candidate decorrelators $( \textbf{u}_m^{(j)} )^\star$ =
$\widetilde{\textbf{u}}_m^{(j)}$.
\item \textbf{Step 2}: Evaluate the minimum SINR
\begin{equation*}
\begin{array}{l}
\displaystyle\min_{\substack{\forall k \in \mathcal{K}\\\forall l
\in \mathcal{L}_k}} \widetilde{\gamma}_l^{(k)} (
\widehat{\mathcal{H}}, \{ \{ \widetilde{\textbf{v}}_m^{(j)}
\}_{m=1}^{L_j} \}_{j=1}^K, \widetilde{\textbf{u}}_l^{(k)} ) =
\widehat{\gamma}.
\end{array}
\end{equation*}
Update the target SINR $\widetilde{\gamma} = \widehat{\gamma}$.
\item \textbf{Step 3}: Optimize the precoders with fixed
decorrelators (cf. Section~\ref{Sec:Optimize_precoders})
\begin{center}
$\{ \xi, \{ \{ \widetilde{\textbf{v}}_m^{(j)} \}_{m=1}^{L_j}
\}_{j=1}^K \} = \mathcal{Q}_\textrm{v}( \widetilde{\gamma}, \{ \{
\widetilde{\textbf{u}}_m^{(j)} \}_{m=1}^{L_j} \}_{j=1}^K )$.
\end{center}
\item \textbf{Step 4}: Evaluate the required transmit power of each
source node $\rho_j \widetilde{\beta} = \sum_{m=1}^{L_j} (
\widetilde{\textbf{v}}_m^{(j)} )^\dag
\widetilde{\textbf{v}}_m^{(j)}$.
\item \textbf{Step 5}: Evaluate the minimum SINR with up-scaled
precoders
\begin{equation*}
\min_{\substack{\forall k \in \mathcal{K}\\\forall l \in
\mathcal{L}_k}} \widetilde{\gamma}_l^{(k)} ( \widehat{\mathcal{H}},
\{ \{ \sqrt{P_j / ( \rho_j \widetilde{\beta}} )
\widetilde{\textbf{v}}_m^{(j)} \}_{m=1}^{L_j} \}_{j=1}^K,
\widetilde{\textbf{u}}_l^{(k)} ) = \widehat{\gamma}.
\end{equation*}
Update the target SINR $\widetilde{\gamma} = \widehat{\gamma}$ and
candidate precoders $( \textbf{v}_m^{(j)} )^\star$ = $\sqrt{P_j / (
\rho_j \widetilde{\beta} )} \widetilde{\textbf{v}}_m^{(j)}$.
\end{list}
\textbf{Until} the minimum SINR $\widetilde{\gamma}$
converges.\\
\textbf{Return} precoders $\{ \{ ( \textbf{v}_m^{(j)} )^\star
\}_{m=1}^{L_j} \}_{j=1}^K$ and decorrelators $\{ \{ (
\textbf{u}_m^{(j)} )^\star \}_{m=1}^{L_j} \}_{j=1}^K$.
\end{algorithm}
\begin{figure}[t]
\centering
\includegraphics[width = 3.5in]{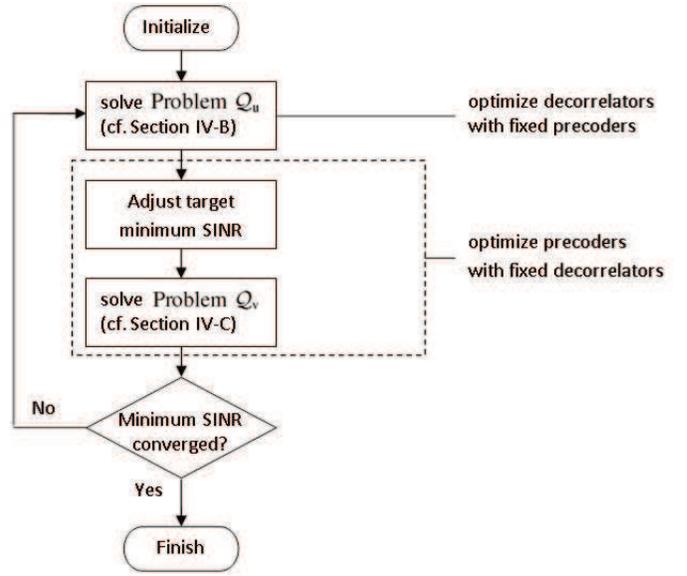}
\caption{Illustration of overall
algorithm.}\label{Fig:RelationshipAmongAlg}
\end{figure}

\subsection{Decorrelator Optimization with Fixed Precoders}
\label{Sec:Optimize_decorrelators} We define the decorrelator
optimization problem with fixed precoders to maximize the minimum
SINR among all users' data streams as\\
\indent\emph{Problem~4 (Maximum SINR Decorrelator Design with Fixed
Precoders):}
\begin{IEEEeqnarray*}{cl}
\IEEEeqnarraymulticol{2}{l}{ \{ \{ ( \textbf{u}_m^{(j)} )^\star
\}_{m=1}^{L_j} \}_{j=1}^K = \mathcal{Q}_\textrm{u}( \{ \{
\textbf{v}_m^{(j)} \}_{m=1}^{L_j}
\}_{j=1}^K )}\\
\arg\!\!\!\!\!\!\max_{\textbf{u}_m^{(j)} \in
\mathbb{C}^{N\!\times\!1}} \min_{\substack{\forall k \in
\mathcal{K}\\\forall l \in \mathcal{L}_k}}
\widetilde{\gamma}_l^{(k)} ( \widehat{\mathcal{H}}, \{ \{
\textbf{v}_m^{(j)} \}_{m=1}^{L_j} \}_{j=1}^K, \textbf{u}_l^{(k)}
).\IEEEyesnumber\label{Eqn:Problem_Q_U_QoS_constraint}
\end{IEEEeqnarray*}
As per (\ref{Eqn:Problem_Q_U_QoS_constraint}), the worst-case SINR
of data stream estimate $\widetilde{s}_l^{(k)}$ only depends on
decorrelator $\textbf{u}_l^{(k)}$. Therefore, we can independently
optimize each decorrelator, i.e.
\begin{IEEEeqnarray*}{cl}
\IEEEeqnarraymulticol{2}{l}{ ( \textbf{u}_l^{(k)} )^\star =
\mathcal{Q}_\textrm{u}^{(k,l)}( \{ \{
\textbf{v}_m^{(j)} \}_{l=1}^{L_j} \}_{j=1}^K )}\\
\arg\!\!\!\!\!\!\max_{\textbf{u}_l^{(k)} \in
\mathbb{C}^{N\!\times\!1}}&\;\;\widetilde{\gamma}_l^{(k)} (
\widehat{\mathcal{H}}, \{ \{ \textbf{v}_m^{(j)} \}_{m=1}^{L_j}
\}_{j=1}^K, \textbf{u}_l^{(k)}
),\IEEEyesnumber\label{Eqn:Problem_Q_U_cost}
\end{IEEEeqnarray*}
and the optimal decorrelator is given by
Theorem~\ref{Theorem:Optimal_decorrelator}.
\begin{theorem}[Optimal Decorrelator with Fixed Precoders]
\label{Theorem:Optimal_decorrelator} Given precoders $\{ \{
\textbf{v}_m^{(j)} \}_{m=1}^{L_j} \}_{j=1}^K$, the optimal
decorrelator for data stream estimate $\widetilde{s}_l^{(k)}$ is
given by $( \textbf{u}_l^{(k)} )^\star = \frac{( \textbf{F}_l^{(k)}
)^{-\frac{1}{2}} ( \textbf{w}_l^{(k)} )^\star}{|| (
\textbf{F}_l^{(k)} )^{-\frac{1}{2}} ( \textbf{w}_l^{(k)} )^\star
||}$, where
\begin{IEEEeqnarray*}{l}
\textbf{F}_l^{(k)} \textstyle= \sum_{j=1}^K \sum_{m=1}^{L_j}
\widehat{\textbf{H}}^{(k,j)} \textbf{v}_m^{(j)} ( \textbf{v}_m^{(j)}
)^\dag ( \widehat{\textbf{H}}^{(k,j)} )^\dag\\
\textstyle+ \varepsilon \sum_{j=1}^K \sum_{m=1}^{L_j} ||
\textbf{v}_m^{(j)} ||^2 \textbf{I}_N - \widehat{\textbf{H}}^{(k,k)}
\textbf{v}_l^{(k)} ( \textbf{v}_l^{(k)} )^\dag (
\widehat{\textbf{H}}^{(k,k)} )^\dag\\
- \varepsilon || \textbf{v}_l^{(k)} ||^2 \textbf{I}_N + N_0
\textbf{I}_N,
\end{IEEEeqnarray*}
$( \textbf{w}_l^{(k)} )^\star$ is the principle eigenvector of $(
\textbf{F}_l^{(k)} )^{-\frac{1}{2}} \textbf{E}_l^{(k)} (
\textbf{F}_l^{(k)} )^{-\frac{1}{2}}$, and $\textbf{E}_l^{(k)} =
\widehat{\textbf{H}}^{(k,k)} \textbf{v}_l^{(k)} ( \textbf{v}_l^{(k)}
)^\dag ( \widehat{\textbf{H}}^{(k,k)} )^\dag - \varepsilon ||
\textbf{v}_l^{(k)} ||^2 \textbf{I}_N$.
\end{theorem}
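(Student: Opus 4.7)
The plan is to recognize that the worst-case SINR in (\ref{Eqn:Worst-case_SINR}), viewed as a function of $\textbf{u}_l^{(k)}$ with the precoders held fixed, is a generalized Rayleigh quotient, and then invoke the standard whitening-plus-principal-eigenvector solution.

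First I would collect the numerator of $\widetilde{\gamma}_l^{(k)}$ as the Hermitian quadratic form
\begin{equation*}
(\textbf{u}_l^{(k)})^\dag\bigl[\widehat{\textbf{H}}^{(k,k)} \textbf{v}_l^{(k)} (\textbf{v}_l^{(k)})^\dag (\widehat{\textbf{H}}^{(k,k)})^\dag - \varepsilon \|\textbf{v}_l^{(k)}\|^2 \textbf{I}_N\bigr]\textbf{u}_l^{(k)} = (\textbf{u}_l^{(k)})^\dag\textbf{E}_l^{(k)}(\textbf{u}_l^{(k)}),
\end{equation*}
and likewise regroup every term in the denominator so that it equals $(\textbf{u}_l^{(k)})^\dag\textbf{F}_l^{(k)}\textbf{u}_l^{(k)}$ with $\textbf{F}_l^{(k)}$ precisely as defined in the statement. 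Both matrices are Hermitian by construction, and the ratio is scale-invariant in $\textbf{u}_l^{(k)}$, so without loss of generality one may impose any convenient normalization on $\textbf{u}_l^{(k)}$.

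Second, I would verify that $\textbf{F}_l^{(k)}\succ 0$, which is the main technical point to check because the definition of $\textbf{F}_l^{(k)}$ contains a subtractive term of indefinite sign. The contribution of index $(j,m)=(k,l)$ to the first double sum exactly cancels $\widehat{\textbf{H}}^{(k,k)} \textbf{v}_l^{(k)} (\textbf{v}_l^{(k)})^\dag (\widehat{\textbf{H}}^{(k,k)})^\dag$, and the $(k,l)$ contribution to the $\varepsilon$-weighted sum cancels $\varepsilon\|\textbf{v}_l^{(k)}\|^2 \textbf{I}_N$, leaving a sum of rank-one positive semidefinite matrices over the indices $(j,m)\neq(k,l)$ plus the strictly positive $N_0\textbf{I}_N$ term, which is clearly positive definite.

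Third, with $\textbf{F}_l^{(k)}\succ 0$ I would apply the whitening substitution $\textbf{w}_l^{(k)}=(\textbf{F}_l^{(k)})^{1/2}\textbf{u}_l^{(k)}$, turning the problem into
\begin{equation*}
\max_{\textbf{w}_l^{(k)}\neq\textbf{0}}\;\frac{(\textbf{w}_l^{(k)})^\dag (\textbf{F}_l^{(k)})^{-1/2}\textbf{E}_l^{(k)}(\textbf{F}_l^{(k)})^{-1/2}(\textbf{w}_l^{(k)})}{(\textbf{w}_l^{(k)})^\dag \textbf{w}_l^{(k)}},
\end{equation*}
an ordinary Rayleigh quotient. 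By the Rayleigh--Ritz theorem the maximizer $(\textbf{w}_l^{(k)})^\star$ is the principal eigenvector of $(\textbf{F}_l^{(k)})^{-1/2}\textbf{E}_l^{(k)}(\textbf{F}_l^{(k)})^{-1/2}$; back-substituting and normalizing to unit Euclidean norm yields the stated formula for $(\textbf{u}_l^{(k)})^\star$. The only genuine obstacle is the positive-definiteness step above; once that cancellation is observed, the remainder is a textbook generalized eigenvalue computation.
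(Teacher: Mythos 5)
Your proposal is correct and follows essentially the same route as the paper's own proof: rewrite the worst-case SINR as the generalized Rayleigh quotient $(\textbf{u}_l^{(k)})^\dag\textbf{E}_l^{(k)}\textbf{u}_l^{(k)}/(\textbf{u}_l^{(k)})^\dag\textbf{F}_l^{(k)}\textbf{u}_l^{(k)}$, whiten with $(\textbf{F}_l^{(k)})^{-1/2}$, and apply Rayleigh--Ritz. Your explicit cancellation argument establishing $\textbf{F}_l^{(k)}\succ 0$ is a welcome detail the paper merely asserts; the only point you leave implicit that the paper flags (in a footnote) is the assumption that $\varepsilon$ is small enough for $\textbf{E}_l^{(k)}$ to be positive semi-definite, which is needed for the resulting SINR to be meaningful but not for the eigenvector argument itself.
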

\begin{IEEEproof}
Please refer to Appendix~\ref{Proof:Optimal_decorrelator}.
\end{IEEEproof}

\subsection{Precoder Optimization with Fixed Decorrelators}
\label{Sec:Optimize_precoders} In Section~\ref{Sec:NP-hard}, we
defined the precoder optimization problem with fixed decorrelators,
Problem~$\mathcal{Q}_\textrm{v}$ (cf.
(\ref{Eqn:Problem_Q_V_cost})--(\ref{Eqn:Problem_Q_V_QoS_constraint})).
Since Problem~$\mathcal{Q}_\textrm{v}$ belongs to the class of
separable homogenous QCQP, it is NP-hard in general. In the
literature, some authors consider instances of this class of
problems for MISO \emph{broadcast} channel that are always solvable
(cf. \cite{Jnl:Linear_precoding_conic:Eldar,
Jnl:Downlink_beamforming:Ottersten} and references therein), whereas
some authors consider problems for MISO \emph{multicast} channel
that are always NP-hard (cf.
\cite{Jnl:MBS-SDMA_using_SDR_with_perfect_CSI:Sidiropoulos_Luo} and
references therein). For the \emph{interference} channel model
considered herein, we provide an algorithm for obtaining the optimal
solution for Problem~$\mathcal{Q}_\textrm{v}$.

One effective approach for solving separable homogenous QCQP is to
apply semidefinite relaxation (SDR) techniques. Let
$\textbf{V}_l^{(k)} = \textbf{v}_l^{(k)} ( \textbf{v}_l^{(k)}
)^\dag$. From (\ref{Eqn:Worst-case_SINR}), the worst-case SINR of
data stream estimate $s_l^{(k)}$ can be expressed as
(\ref{Eqn:Worst-case_SINR_V}).
\begin{figure*}[!t]
\normalsize
\begin{IEEEeqnarray*}{Rl}
\widetilde{\gamma}_l^{(k)} ( \widehat{\mathcal{H}}, \{ \{
\textbf{v}_m^{(j)} \}_{m=1}^{L_j} \}_{j=1}^K, \textbf{u}_l^{(k)} )
&\textstyle= \frac{\textrm{Tr} ( ( \widehat{\textbf{H}}^{(k,k)}
)^\dag \textbf{u}_l^{(k)} ( \textbf{u}_l^{(k)} )^\dag
\widehat{\textbf{H}}^{(k,k)} \textbf{V}_l^{(k)} ) - \varepsilon ||
\textbf{u}_l^{(k)} ||^2 \textrm{Tr} ( \textbf{V}_l^{(k)} )}{\left(
\substack{\sum_{j=1}^K \sum_{m=1}^{L_j} \textrm{Tr} ( (
\widehat{\textbf{H}}^{(k,j)} )^\dag \textbf{u}_l^{(k)} (
\textbf{u}_l^{(k)} )^\dag \widehat{\textbf{H}}^{(k,j)}
\textbf{V}_m^{(j)} ) + \varepsilon || \textbf{u}_l^{(k)} ||^2
\sum_{j=1}^K \sum_{m=1}^{L_j} \textrm{Tr} ( \textbf{V}_m^{(j)} )\\-
\textrm{Tr} ( ( \widehat{\textbf{H}}^{(k,k)} )^\dag
\textbf{u}_l^{(k)} ( \textbf{u}_l^{(k)} )^\dag
\widehat{\textbf{H}}^{(k,k)} \textbf{V}_l^{(k)} ) - \varepsilon ||
\textbf{u}_l^{(k)} ||^2 \textrm{Tr} ( \textbf{V}_l^{(k)} ) + N_0 ||
\textbf{u}_l^{(k)} ||^2}
\right)}\;\;\;\;\;\;\IEEEyesnumber\label{Eqn:Worst-case_SINR_V}\\
&\textstyle\triangleq \widetilde{\Gamma}_l^{(k)} (
\widehat{\mathcal{H}}, \{ \{ \textbf{V}_m^{(j)} \}_{m=1}^{L_j}
\}_{j=1}^K, \textbf{u}_l^{(k)} ).
\end{IEEEeqnarray*}
\hrulefill
\end{figure*}
It follows that we can \emph{equivalently} express the precoder
optimization problem with fixed decorrelators as
\begin{IEEEeqnarray*}{cl}
\IEEEeqnarraymulticol{2}{l}{ \{ \Xi^\star, \{ \{ (
\textbf{V}_m^{(j)} )^\star \}_{m=1}^{L_j} \}_{j=1}^K \} =
\widetilde{\mathcal{Q}}_\textrm{v}( \gamma, \{ \{
\textbf{u}_m^{(j)} \}_{m=1}^{L_j} \}_{j=1}^K )}\;\;\;\;\;\;\;\;\;\;\;\;\\
\min_{\substack{\textbf{V}_m^{(\!j\!)}\!\in
\mathbb{C}^{\!M\!\times\!M}\\\Xi \in
\mathbb{R}_+}}\!\!&\;\;\Xi\IEEEyessubnumber\label{Eqn:Problem_R_V_cost}\\
\textrm{s. t.}&\textstyle\sum_{l=1}^{L_k} \textrm{Tr} (
\textbf{V}_l^{(k)} ) \leq \rho_k \Xi, \forall k \in
\mathcal{K},\IEEEyessubnumber\label{Eqn:Problem_R_V_TxPwr_constraint}\\
&\widetilde{\Gamma}_l^{(k)}\!( \widehat{\mathcal{H}},\!\{ \{
\textbf{V}_m^{(j)} \}_{m=1}^{L_j}
\}_{j=1}^K,\!\textbf{u}_l^{(k)}\!)\!\ge\!\gamma,\!\forall
l\!\in\!\mathcal{L}_k,\!\forall
k\!\in\!\mathcal{K},\IEEEyessubnumber\label{Eqn:Problem_R_V_QoS_constraint}\\
&\textbf{V}_l^{(k)} \succeq 0, \forall k \in \mathcal{K}, \forall l
\in
\mathcal{L}_k,\IEEEyessubnumber\label{Eqn:Problem_R_V_SD_constraint}\\
&\textrm{rank} ( \textbf{V}_l^{(k)} ) = 1, \forall k \in
\mathcal{K}, \forall l \in
\mathcal{L}_k,\IEEEyessubnumber\label{Eqn:Problem_R_V_rank_constraint}
\end{IEEEeqnarray*}
where (\ref{Eqn:Problem_R_V_SD_constraint}) and
(\ref{Eqn:Problem_R_V_rank_constraint}) follow from the definition
of $\textbf{V}_l^{(k)}$, (\ref{Eqn:Problem_R_V_TxPwr_constraint})
are power constraints, and (\ref{Eqn:Problem_R_V_QoS_constraint})
are SINR constraints. Note that we could obtain the optimal precoder
$( \textbf{v}_m^{(j)} )^\star$ from the eigenvector of $(
\textbf{V}_m^{(j)} )^\star$ corresponding to the only non-zero
eigenvalue.

Comparing between Problem~$\widetilde{\mathcal{Q}}_\textrm{v}$ and
Problem~$\mathcal{Q}_\textrm{v}$, the SINR constraints of
Problem~$\widetilde{\mathcal{Q}}_\textrm{v}$
(\ref{Eqn:Problem_R_V_QoS_constraint}) are convex inequalities, i.e.
\begin{IEEEeqnarray*}{l}
( 1\!+\!\gamma ) \textrm{Tr} ( ( \widehat{\textbf{H}}^{(k,k)} )^\dag
\textbf{u}_l^{(k)} ( \textbf{u}_l^{(k)} )^\dag
\widehat{\textbf{H}}^{(k,k)} \textbf{V}_l^{(k)} ) - \gamma N_0 ||
\textbf{u}_l^{(k)} ||^2\\
\textstyle+ ( \gamma\!-\!1 )\varepsilon || \textbf{u}_l^{(k)} ||^2
\textrm{Tr} ( \textbf{V}_l^{(k)} ) - \gamma \varepsilon ||
\textbf{u}_l^{(k)} ||^2 \sum_{j=1}^K \sum_{m=1}^{L_j} \textrm{Tr} ( \textbf{V}_m^{(j)} )\\
\textstyle- \gamma \sum_{j=1}^K \sum_{m=1}^{L_j} \textrm{Tr} ( (
\widehat{\textbf{H}}^{(k,j)} )^\dag \textbf{u}_l^{(k)} (
\textbf{u}_l^{(k)} )^\dag \widehat{\textbf{H}}^{(k,j)}
\textbf{V}_m^{(j)} ) \ge 0,
\end{IEEEeqnarray*}
but Problem~$\widetilde{\mathcal{Q}}_\textrm{v}$ is still a
non-convex problem due to the rank constraints
(\ref{Eqn:Problem_R_V_rank_constraint}). By means of SDR, we
\emph{neglect} the rank constraints and
Problem~$\widetilde{\mathcal{Q}}_\textrm{v}$ degenerates into an SDP
that can be solved efficiently \cite{Bok:Convex_optimization:Boyd}.
In general, the resultant solution $\{ \Xi^\star, \{ \{ (
\textbf{V}_m^{(j)} )^\star \}_{m=1}^{L_j} \}_{j=1}^K \}$ could have
arbitrary rank. If $\textrm{rank} ( ( \textbf{V}_m^{(j)} )^\star ) =
1$, $\forall m \in \mathcal{L}_j$ and $\forall j \in \mathcal{K}$,
then constraints (\ref{Eqn:Problem_R_V_rank_constraint}) are
intrinsically satisfied and $\{ \{ ( \textbf{V}_m^{(j)} )^\star
\}_{m=1}^{L_j} \}_{j=1}^K$ are optimal. The following theorem
summarizes the optimality of the SDR solution in
(\ref{Eqn:Problem_R_V_cost})--(\ref{Eqn:Problem_R_V_rank_constraint}).

\begin{theorem}[Optimality of the SDR Solution]\label{Theorem:Optimiality_conditions}
The SDR solution of Problem~$\widetilde{\mathcal{Q}}_{\textrm{v}}$
will always give rank 1 solutions (i.e. $\textrm{rank} ( (
\textbf{V}_m^{(j)} )^\star ) = 1$) and hence, the SDR solution is
optimal for $\widetilde{\mathcal{Q}}_{\textrm{v}}$.
\end{theorem}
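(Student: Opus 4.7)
The plan is to verify Theorem 2 by a standard KKT/complementary-slackness argument applied to the SDP obtained after dropping the rank constraints (\ref{Eqn:Problem_R_V_rank_constraint}). Since the relaxed problem has a convex (strictly feasible in typical instances) feasible set, strong duality holds and the KKT conditions are both necessary and sufficient. The goal is to show that at any KKT point, each primal matrix $(\textbf{V}_l^{(k)})^\star$ is forced to have rank exactly one by the structure of the stationarity equation.

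First I would introduce dual variables $\mu_k \geq 0$ for the $K$ power constraints (\ref{Eqn:Problem_R_V_TxPwr_constraint}), $\lambda_l^{(k)} \geq 0$ for the $\sum_k L_k$ SINR constraints (\ref{Eqn:Problem_R_V_QoS_constraint}) rewritten as the convex linear inequalities displayed after (\ref{Eqn:Problem_R_V_rank_constraint}), and Hermitian PSD matrices $\textbf{Z}_l^{(k)} \succeq 0$ for (\ref{Eqn:Problem_R_V_SD_constraint}). I would form the Lagrangian, differentiate with respect to $\textbf{V}_l^{(k)}$, and collect terms to obtain a stationarity condition of the form
\begin{equation*}
\textbf{Z}_l^{(k)} = \textbf{A}_l^{(k)} - \lambda_l^{(k)}\, ( \widehat{\textbf{H}}^{(k,k)} )^\dag \textbf{u}_l^{(k)} ( \textbf{u}_l^{(k)} )^\dag \widehat{\textbf{H}}^{(k,k)},
\end{equation*}
where
\begin{equation*}
\textbf{A}_l^{(k)} = \mu_k \textbf{I}_M + \gamma \!\!\sum_{(k',l') \neq (k,l)} \!\!\lambda_{l'}^{(k')} \bigl[ ( \widehat{\textbf{H}}^{(k',k)} )^\dag \textbf{u}_{l'}^{(k')} ( \textbf{u}_{l'}^{(k')} )^\dag \widehat{\textbf{H}}^{(k',k)} + \varepsilon \|\textbf{u}_{l'}^{(k')}\|^2 \textbf{I}_M \bigr] + \lambda_l^{(k)} \varepsilon \|\textbf{u}_l^{(k)}\|^2 \textbf{I}_M.
\end{equation*}
All terms in $\textbf{A}_l^{(k)}$ are positive semi-definite, while the subtracted term is rank one.

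The next step is to argue $\textbf{A}_l^{(k)} \succ 0$. The power constraint of user $k$ must be active at the optimum (otherwise $\Xi$ could be decreased), yielding $\mu_k > 0$ and therefore $\textbf{A}_l^{(k)} \succeq \mu_k \textbf{I}_M \succ 0$; when $\varepsilon > 0$ the identity contributions of the other terms also reinforce this. Since $\textbf{A}_l^{(k)}$ has full rank $M$ and we subtract a rank-one matrix, $\operatorname{rank}(\textbf{Z}_l^{(k)}) \geq M - 1$. Complementary slackness $\textbf{Z}_l^{(k)} (\textbf{V}_l^{(k)})^\star = 0$ then forces $(\textbf{V}_l^{(k)})^\star$ to lie in the null space of $\textbf{Z}_l^{(k)}$, so $\operatorname{rank}((\textbf{V}_l^{(k)})^\star) \leq 1$. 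Finally, $(\textbf{V}_l^{(k)})^\star \neq 0$ because the SINR constraint (\ref{Eqn:Problem_R_V_QoS_constraint}) with $\gamma > 0$ cannot be satisfied by the zero matrix (the numerator of (\ref{Eqn:Worst-case_SINR_V}) would vanish), so $\operatorname{rank}((\textbf{V}_l^{(k)})^\star) = 1$ exactly.

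The delicate step, and the main obstacle, is justifying $\textbf{A}_l^{(k)} \succ 0$ uniformly in $(k,l)$: one must rule out the degenerate scenario in which both $\mu_k = 0$ (power constraint inactive) and all dual multipliers $\lambda_{l'}^{(k')}$ that inject an identity contribution into $\textbf{A}_l^{(k)}$ vanish. I would handle this by invoking the active-constraint argument above for $\mu_k$ and, as a back-up when $\varepsilon > 0$, by observing that at least one SINR constraint must be binding at the optimum of a power-minimization problem. A secondary obstacle is the differentiation itself: $\textbf{V}_l^{(k)}$ appears both as the desired signal in constraint $(k,l)$ and as interference in every other SINR constraint, so the gradient bookkeeping must be done carefully, but the cancellations produce the clean form above. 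Once these points are secured, the rank-one conclusion and hence the optimality of the SDR solution for $\widetilde{\mathcal{Q}}_\textrm{v}$ follow immediately.
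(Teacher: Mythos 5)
Your overall route is the same as the paper's: relax the rank constraints, pass to the SDP dual, write the dual slack matrix as a positive semi-definite part minus a rank-one part, conclude $\textrm{rank}(\textbf{Z}_l^{(k)}) \ge M-1$, and finish with complementary slackness and $(\textbf{V}_l^{(k)})^\star \neq \textbf{0}$. The skeleton and the final rank-counting step are correct and match Appendix~C.

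The genuine gap is exactly at the point you flag as delicate, and your primary fix for it does not work. You claim every power constraint must be active at the optimum, ``otherwise $\Xi$ could be decreased,'' hence $\mu_k>0$ for all $k$. But $\Xi$ is a \emph{single} scalar shared by all $K$ power constraints $\sum_{l}\textrm{Tr}(\textbf{V}_l^{(k)})\le \rho_k\Xi$: at the optimum $\Xi$ is pinned by the bottleneck user(s) only, and a non-bottleneck user's constraint can be strictly slack (its own SINR constraints, not its power budget, prevent further power reduction), forcing $\mu_k=0$ for that user by complementary slackness. Your back-up (``at least one SINR constraint must be binding'') also falls short as stated: a binding constraint need not carry a strictly positive multiplier, and you need positivity of specific multipliers to guarantee an identity contribution inside \emph{every} $\textbf{A}_l^{(k)}$. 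The paper closes this hole differently: since $b_l^{(k)}=\gamma N_0\|\textbf{u}_l^{(k)}\|^2>0$ and the optimal value is positive, strong duality forces some dual SINR multiplier to be positive; then, using $(\textbf{V}_m^{(j)})^\star\neq\textbf{0}$ and $\textrm{Tr}(\textbf{Z}_m^{(j)}(\textbf{V}_m^{(j)})^\star)=0$ to deduce $\textbf{Z}_m^{(j)}\not\succ\textbf{0}$, it runs a contradiction argument showing that \emph{every} $(y_l^{(k)})^\star>0$ (if some $(y_m^{(j)})^\star$ vanished, the rank-one negative term in $\textbf{Z}_m^{(j)}$ would disappear and $\textbf{Z}_m^{(j)}$ would become positive definite). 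With all $(y_l^{(k)})^\star>0$ and $\varepsilon>0$, each $\textbf{A}_l^{(k)}$ picks up a strictly positive multiple of $\varepsilon\|\textbf{u}\|^2\textbf{I}$ regardless of whether $\mu_k$ vanishes, which is what actually secures $\textbf{A}_l^{(k)}\succ 0$ and hence $\textrm{rank}(\textbf{Z}_l^{(k)})=M-1$. To repair your proof, replace the ``all power constraints are active'' claim with this strict-positivity argument for the SINR multipliers (or at minimum with the strong-duality observation that some multiplier is positive, combined with $\varepsilon>0$).
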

\begin{IEEEproof}
Please refer to Appendix~\ref{Proof:Optimiality_conditions}.
\end{IEEEproof}

\section{Simulation Results and Discussions} \label{Sec:Simulation_results}
In this section, we evaluate the proposed robust transceiver design
via numerical simulations. In particular, we compare the performance
of the proposed scheme against four baseline schemes:
\begin{list}{\labelitemi}{\leftmargin=0em}
\item \textbf{Baseline 1} the conventional IA scheme
\cite{Jnl:IA:Cadambe_Jafar};
\item \textbf{Baseline 2} the SINR maximization scheme
\cite[Algorithm~2]{Misc:Distributed_IA:Gomadam_Cadambe_Jafar};
\item \textbf{Baseline 3} a naive max-min SINR scheme adopted from
\cite{Cnf:RobustQosBroadcastMimo:Boche};
\item \textbf{Baseline 4} a naive max-min SINR scheme adopted from
\cite{Jnl:Linear_precoding_conic:Eldar}.
\end{list}
As discussed in Section~\ref{Sec:Prior_works}, baselines 1 and 2 are
theoretically promising schemes for the interference channel but
neglect important practical issues such as CSI uncertainty and
fairness among users. On the other hand, baselines 3 and 4 are
adopted from existing max-min SINR schemes that are originally
designed for the broadcast channel (i.e. there is \emph{only a
single transmitter} and multiple receivers). Without loss of
generality, we assume independent and identically distributed (iid)
Rayleigh fading channels, i.e. $[\textbf{H}^{(k,j)}]_{(a,b)} \sim
\mathcal{CN} ( 0, 1 )$, $\forall j,k \in \mathcal{K}$, $\forall a
\in [1,N]$, and $\forall b \in [1,M]$. For the purpose of
illustration, we consider the scenario where all users have the same
power constraint $P_1 = \ldots = P_K = P$. In
Fig.~\ref{Fig:K3_M4_L2_WorstStream} to Fig.~\ref{Fig:Imperfect} we
present simulation results for the average data rates\footnote{The
average data rate is defined as the average goodput (i.e. the
bits/s/Hz successfully delivered to the receiver). Specifically, the
goodput of data stream $s^{(k)}_{l}$ is given by $r_l^{(k)}
\mathcal{I} ( r_l^{(k)} \leq C_l^{(k)} )$, where $r_l^{(k)} =
\log_2( 1 + \gamma_l^{(k)}( \widehat{\mathcal{H}}, \{ \{
\textbf{v}_m^{(j)} \}_{m=1}^{L_j} \}_{j=1}^K, \textbf{u}_l^{(k)} )
)$ is the scheduled data rate based on the SINR perceived with
respect to imperfect CSIT $\widehat{\mathcal{H}} = \{
\widehat{\textbf{H}}^{(k,j)} \}_{j,k=1}^K$, and $C_l^{(k)} = \log_2(
1 + \gamma_l^{(k)}( \mathcal{H}, \{ \{ \textbf{v}_m^{(j)}
\}_{m=1}^{L_j} \}_{j=1}^K, \textbf{u}_l^{(k)} ) )$ is the actual
instantaneous mutual information.} versus SNR\footnote{The SNR is
defined as $\frac{P}{N_0}$, where $N_0$ is the AWGN variance.} with
different number of users and levels of CSI uncertainty.

\subsection{Fairness Performance}
In Fig.~\ref{Fig:K3_M4_L2_WorstStream} and
Fig.~\ref{Fig:K3_M4_L2_Sum}, we compare the average data rates of
the proposed and baseline schemes. For the purpose of illustration,
we consider the three-user $4 \times 4$ MIMO interference channel,
where each user transmits $L=2$ data streams and the precoders are
designed with imperfect CSIT with $\varepsilon = \{ 0.1, 0.15 \}$,
whereas the receivers have perfect CSIR. It can be observed that the
proposed scheme achieves much higher average worst-case data rate
per user than all the baseline schemes, and thus provides better
minimum performance. For example, at CSI error $\varepsilon = 0.15$,
the proposed scheme has 5dB SNR gain over the SINR maximization
algorithm (baseline 2) at providing a worst case data rate of 6
b/s/Hz and the conventional IA scheme (baseline 1) cannot provide
worst-case data rate of 6 b/s/Hz. The superior performances of the
proposed scheme is accountable to both the SDR approach as well as a
suitably chosen utility function (optimizing the worst case
performance). Specifically, the chosen utility function 1) provide
resilience against CSI uncertainties as well as 2) achieve fair
performance among users. On the other hand, the SDR approach also
contributes to obtaining a good solution for solving the
optimization problem.

\subsection{Total Sum Data Rate Performance}
In Fig.~\ref{Fig:K3_M4_L2_Sum}, we compare the average total sum
data rates of the proposed and baseline schemes for $K=3$, $N=M=4$,
$L=2$, and CSI error $\varepsilon = \{ 0.1, 0.15 \}$. It can be
observed that the proposed scheme not only achieves better
worst-case data rate but also achieves higher total sum data rate
than all the baseline schemes. In particular, due to the presence of
CSI error, the total sum rate of the conventional IA scheme
(baseline 1) does not scale linearly with the SNR anymore. Comparing
Fig.~\ref{Fig:K3_M4_L2_Sum} with
Fig.~\ref{Fig:K3_M4_L2_WorstStream}, it can be observed that the
proposed scheme achieves the performance gain on fairness without
sacrificing the total sum data rate.

\begin{figure}[t]
\centering
\includegraphics[width = 3.5in]{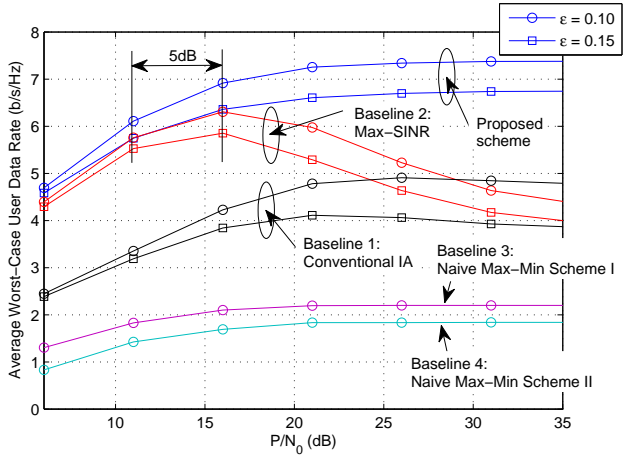}
\caption{Average data rate of the worst-case user versus SNR. $K=3$,
$N=M=4$, $L=2$ and CSI error $\varepsilon = \{0.1 ,
0.15\}$.}\label{Fig:K3_M4_L2_WorstStream}
\end{figure}

\begin{figure}[t]
\centering
\includegraphics[width = 3.5in]{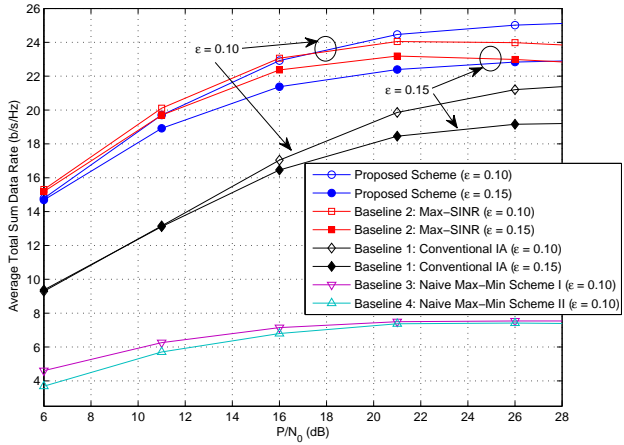}
\caption{Average total sum data rate versus SNR. $K=3$, $N=M=4$,
$L=2$ and CSI error $\varepsilon = \{0.1 ,
0.15\}$.}\label{Fig:K3_M4_L2_Sum}
\end{figure}

\subsection{Robustness to CSI Errors}
In Fig.~\ref{Fig:Imperfect}, we show the average worst-case data
rates of the proposed and baseline schemes for different levels of
CSI uncertainty. It can be observed that the proposed scheme always
achieves higher average worst-case data rate than the baseline
schemes. For example, the SINR maximization algorithm (baseline 2)
is designed assuming perfect CSI; its performance degrades rapidly
for CSI error $\varepsilon > 0.02$ and it can be observed that the
achieved data rate could decrease with increasing SNR. On the other
hand, the proposed scheme achieves a robust degradation with respect
to CSI errors.

\begin{figure}[t]
\centering
\includegraphics[width = 3.5in]{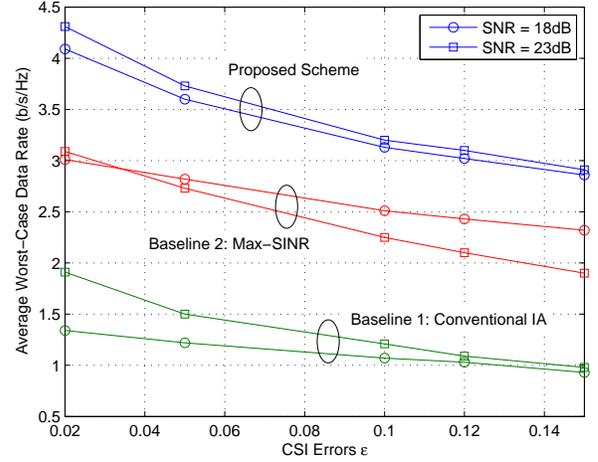}
\caption{Average worst-case data rate versus CSI errors. $K=3$,
$N=M=4$, $L=2$ and SNR 18dB and 23dB.}\label{Fig:Imperfect}
\end{figure}

\section{Conclusions} \label{Sec:Conclusions}
In this paper, we proposed a robust transceiver design for the
$K$-pair quasi-static MIMO interference channel with fairness
considerations. Specifically, we formulated the
precoder-decorrelator design as an optimization problem to maximize
the worst-case SINR among all users. We devised a low complexity
iterative algorithm based on AO and SDR techniques. Numerical
results verify the advantages of incorporating into transceiver
design for the interference channel important practical issues such
as CSI uncertainty and fairness performance.

\appendices
\begin{figure*}[!t]
\normalsize
\begin{IEEEeqnarray*}{l}
\gamma_l^{(k)} ( \mathcal{H}, \{ \{ \textbf{v}_m^{(j)}
\}_{m=1}^{L_j} \}_{j=1}^K, \textbf{u}_l^{(k)} )\\
\textstyle= \frac{|| ( \textbf{u}_l^{(k)} )^\dag (
\widehat{\textbf{H}}^{(k,k)}\!+ \mathbf{\Delta}^{(k,k)} )
\textbf{v}_l^{(k)} ||^2}{\sum_{\substack{m=1\\m \neq l}}^{L_k} || (
\textbf{u}_l^{(k)} )^\dag ( \widehat{\textbf{H}}^{(k,k)}\!+
\mathbf{\Delta}^{(k,k)} ) \textbf{v}_m^{(k)} ||^2 +
\sum_{\substack{j=1\\j \neq k}}^K \sum_{m=1}^{L_j} || (
\textbf{u}_l^{(k)} )^\dag ( \widehat{\textbf{H}}^{(k,j)}\!+
\mathbf{\Delta}^{(k,j)} ) \textbf{v}_m^{(j)} ||^2 + N_0 ||
\textbf{u}_l^{(k)} ||^2}\IEEEyessubnumber\label{Eqn:Sinr_perfect0}\\
\textstyle \ge \frac{|| ( \textbf{u}_l^{(k)} )^\dag
\widehat{\textbf{H}}^{(k,k)} \textbf{v}_l^{(k)} ||^2 - || (
\textbf{u}_l^{(k)} )^\dag \mathbf{\Delta}^{(k,k)} \textbf{v}_l^{(k)}
||^2}{\sum_{j=1}^K \sum_{m=1}^{L_j} ( || ( \textbf{u}_l^{(k)} )^\dag
\widehat{\textbf{H}}^{(k,j)} \textbf{v}_m^{(j)} ||^2 + || (
\textbf{u}_l^{(k)} )^\dag \mathbf{\Delta}^{(k,j)} \textbf{v}_m^{(j)}
||^2 ) - (|| ( \textbf{u}_l^{(k)} )^\dag
\widehat{\textbf{H}}^{(k,k)} \textbf{v}_l^{(k)} ||^2 + || (
\textbf{u}_l^{(k)} )^\dag \mathbf{\Delta}^{(k,k)} \textbf{v}_l^{(k)}
||^2 ) + N_0 ||
\textbf{u}_l^{(k)} ||^2}\IEEEyessubnumber\label{Eqn:Sinr_perfect1}\\
\textstyle \ge \frac{|| ( \textbf{u}_l^{(k)} )^\dag
\widehat{\textbf{H}}^{(k,k)} \textbf{v}_l^{(k)} ||^2 - \varepsilon
|| \textbf{u}_l^{(k)} ||^2 || \textbf{v}_l^{(k)} ||^2}{\sum_{j=1}^K
\sum_{m=1}^{L_j} || ( \textbf{u}_l^{(k)} )^\dag
\widehat{\textbf{H}}^{(k,j)} \textbf{v}_m^{(j)} ||^2 + \varepsilon
|| \textbf{u}_l^{(k)} ||^2 \sum_{j=1}^K \sum_{m=1}^{L_j} ||
\textbf{v}_m^{(j)} ||^2 - || ( \textbf{u}_l^{(k)} )^\dag
\widehat{\textbf{H}}^{(k,k)} \textbf{v}_l^{(k)} ||^2 - \varepsilon
|| \textbf{u}_l^{(k)} ||^2 || \textbf{v}_l^{(k)} ||^2 + N_0 ||
\textbf{u}_l^{(k)} ||^2}\IEEEyessubnumber\label{Eqn:Sinr_perfect2}\\
\triangleq \widetilde{\gamma}_l^{(k)} ( \widehat{\mathcal{H}}, \{ \{
\textbf{v}_m^{(j)} \}_{m=1}^{L_j} \}_{j=1}^K, \textbf{u}_l^{(k)} ).
\end{IEEEeqnarray*}
\hrulefill
\end{figure*}

\section{Proof: Worst-Case SINR with Imperfect
CSIT}\label{Proof:Worst-case_SINR} Given CSI estimates
$\widehat{\mathcal{H}} = \{ \widehat{\textbf{H}}^{(k,j)}
\}_{j,k=1}^K$ at the transmitter, the worst-case SINR for each data
stream estimate can be expressed as follows. Consider
$\widetilde{s}_l^{(k)}$ whose SINR $\gamma_l^{(k)} ( \mathcal{H}, \{
\{ \textbf{v}_m^{(j)} \}_{m=1}^{L_j} \}_{j=1}^K, \textbf{u}_l^{(k)}
)$ is given by (\ref{Eqn:Sinr_perfect0}). First, by the triangle
inequality
\begin{equation*}
\begin{array}{l}
|| ( \textbf{u}_l^{(k)} )^\dag (
\widehat{\textbf{H}}^{(k,j)}\!\!+\!\mathbf{\Delta}^{(k,j)} )
\textbf{v}_m^{(j)} ||^2\\
\;\;\;\;\;\;\;\;\;\;\ge || ( \textbf{u}_l^{(k)} )^\dag
\widehat{\textbf{H}}^{(k,j)} \textbf{v}_m^{(j)} ||^2 - || (
\textbf{u}_l^{(k)} )^\dag \mathbf{\Delta}^{(k,j)} \textbf{v}_m^{(j)}
||^2,\\
|| ( \textbf{u}_l^{(k)} )^\dag (
\widehat{\textbf{H}}^{(k,j)}\!\!+\!\mathbf{\Delta}^{(k,j)} )
\textbf{v}_m^{(j)} ||^2\\
\;\;\;\;\;\;\;\;\;\;\le || ( \textbf{u}_l^{(k)} )^\dag
\widehat{\textbf{H}}^{(k,j)} \textbf{v}_m^{(j)} ||^2 + || (
\textbf{u}_l^{(k)} )^\dag \mathbf{\Delta}^{(k,j)} \textbf{v}_m^{(j)}
||^2,
\end{array}
\end{equation*}
and so the SINR is lowered bounded as (\ref{Eqn:Sinr_perfect1}).
Second, with CSI error $|| \mathbf{\Delta}^{(k,j)} ||^2 \leq
\varepsilon$,
\begin{equation*}
\begin{array}{l}
|| ( \textbf{u}_l^{(k)} )^\dag \mathbf{\Delta}^{(k,j)}
\textbf{v}_m^{(j)} ||^2\\
= \textrm{Tr} ( ( \textbf{u}_l^{(k)} )^\dag
\mathbf{\Delta}^{(k,j)} \textbf{v}_m^{(j)} ( \textbf{v}_m^{(j)}
)^\dag ( \mathbf{\Delta}^{(k,j)} )^\dag
\textbf{u}_l^{(k)} )\\
\stackrel{(a)}{\leq} \textrm{Tr} ( \textbf{u}_l^{(k)} (
\textbf{u}_l^{(k)} )^\dag ) \textrm{Tr} ( \mathbf{\Delta}^{(k,j)}
\textbf{v}_m^{(j)} (
\textbf{v}_m^{(j)} )^\dag ( \mathbf{\Delta}^{(k,j)} )^\dag )\\
\stackrel{(b)}{\leq} \textrm{Tr} ( \textbf{u}_l^{(k)} (
\textbf{u}_l^{(k)} )^\dag ) \underbrace{\textrm{Tr} ( (
\mathbf{\Delta}^{(k,j)} )^\dag \mathbf{\Delta}^{(k,j)} )}_{= ||
\mathbf{\Delta}^{(k,j)} ||^2} \textrm{Tr} ( \textbf{v}_m^{(j)} (
\textbf{v}_m^{(j)} )^\dag )\\
= \varepsilon || \textbf{u}_l^{(k)} ||^2 || \textbf{v}_m^{(j)} ||^2,
\end{array}
\end{equation*}
where (a) and (b) follow from the properties that $\textrm{Tr} (
\textbf{A} \textbf{B} ) = \textrm{Tr} ( \textbf{B} \textbf{A} )$ for
$\textbf{A} \in \mathbb{C}^{M \times N}$ and $\textbf{B} \in
\mathbb{C}^{N \times M}$ and $\textrm{Tr} ( \textbf{C} \textbf{D} )
\leq \textrm{Tr} ( \textbf{C} ) \textrm{Tr} ( \textbf{D} )$ for
positive semi-definite $\textbf{C}, \textbf{D} \in \mathbb{C}^{N
\times N}$. Thus, the worst-case SINR perceived by the transmitter
can be expressed as (\ref{Eqn:Sinr_perfect2}).

\section{Proof: Optimal Decorrelator with Fixed Precoders}
\label{Proof:Optimal_decorrelator} From (\ref{Eqn:Worst-case_SINR}),
the worst-case SINR of data stream estimate $\widetilde{s}_l^{(k)}$
can be expressed as
\begin{equation}
\textstyle\widetilde{\gamma}_l^{(k)} ( \widehat{\mathcal{H}}, \{ \{
\textbf{v}_m^{(j)} \}_{m=1}^{L_j} \}_{j=1}^K, \textbf{u}_l^{(k)} ) =
\frac{( \textbf{u}_l^{(k)} )^\dag \textbf{E}_l^{(k)}
\textbf{u}_l^{(k)}}{( \textbf{u}_l^{(k)} )^\dag \textbf{F}_l^{(k)}
\textbf{u}_l^{(k)}},\label{Eqn:Sinr_quotient}
\end{equation}
where
\begin{IEEEeqnarray*}{l}
\textbf{F}_l^{(k)} \textstyle= \sum_{j=1}^K \sum_{m=1}^{L_j}
\widehat{\textbf{H}}^{(k,j)} \textbf{v}_m^{(j)} ( \textbf{v}_m^{(j)}
)^\dag ( \widehat{\textbf{H}}^{(k,j)} )^\dag\\
\textstyle+ \varepsilon \sum_{j=1}^K \sum_{m=1}^{L_j} ||
\textbf{v}_m^{(j)} ||^2 \textbf{I}_N - \widehat{\textbf{H}}^{(k,k)}
\textbf{v}_l^{(k)} ( \textbf{v}_l^{(k)} )^\dag (
\widehat{\textbf{H}}^{(k,k)} )^\dag\\
- \varepsilon || \textbf{v}_l^{(k)} ||^2 \textbf{I}_N + N_0
\textbf{I}_N,
\end{IEEEeqnarray*}
which is a Hermitian and positive definite matrix, and
\begin{equation*}
\begin{array}{l}
\textbf{E}_l^{(k)} = \widehat{\textbf{H}}^{(k,k)} \textbf{v}_l^{(k)}
( \textbf{v}_l^{(k)} )^\dag ( \widehat{\textbf{H}}^{(k,k)} )^\dag -
\varepsilon || \textbf{v}_l^{(k)} ||^2 \textbf{I}_N,
\end{array}\label{Eqn:Sinr_num}
\end{equation*}
which is a non-negative definite\footnote{If $\textbf{E}_l^{(k)}$ is
negative definite, then the CSI error $\varepsilon$ is too high.
Without loss of generality, we assume $\varepsilon$ is sufficiently
small.} Hermitian matrix. Without loss of generality, let
$\textbf{u}_l^{(k)} = c ( \textbf{F}_l^{(k)} )^{-\frac{1}{2}}
\textbf{w}_l^{(k)}$ for arbitrary scaling factor $c \in \mathbb{C}$.
We can equivalently expressed (\ref{Eqn:Sinr_quotient}) as
\begin{IEEEeqnarray*}{Rl}
\textstyle\frac{( \textbf{u}_l^{(k)} )^\dag \textbf{E}_l^{(k)}
\textbf{u}_l^{(k)}}{( \textbf{u}_l^{(k)} )^\dag \textbf{F}_l^{(k)}
\textbf{u}_l^{(k)}} &\textstyle= \frac{( \textbf{w}_l^{(k)} )^\dag (
\textbf{F}_l^{(k)} )^{-\frac{1}{2}} \textbf{E}_l^{(k)} (
\textbf{F}_l^{(k)} )^{-\frac{1}{2}} \textbf{w}_l^{(k)}}{(
\textbf{w}_l^{(k)} )^\dag
\textbf{w}_l^{(k)}}\IEEEyesnumber\label{Eqn:Sinr_max}\\
&\textstyle= \frac{( \textbf{w}_l^{(k)} )^\dag \textbf{Q}
\mathbf{\Lambda} \textbf{Q}^\dag \textbf{w}_l^{(k)}}{(
\textbf{w}_l^{(k)} )^\dag \textbf{w}_l^{(k)}},
\end{IEEEeqnarray*}
where $\textbf{Q} \mathbf{\Lambda} \textbf{Q}^\dag$ denotes the
eigen-decomposition of $( \textbf{F}_l^{(k)} )^{-\frac{1}{2}}
\textbf{E}_l^{(k)} ( \textbf{F}_l^{(k)} )^{-\frac{1}{2}}$. It can be
shown that\footnote{Please refer to
\cite[Appendix~E]{Bok:Adaptive_filter_theory:Haykin}.}
(\ref{Eqn:Sinr_max}) is maximized with $( \textbf{w}_l^{(k)}
)^\star$ being the principal eigenvector\footnote{As per
\cite[Theorem~7.6.3]{Bok:Matrix_analysis:Horn} the principle
eigenvalue of $( \textbf{F}_l^{(k)} )^{-\frac{1}{2}}
\textbf{E}_l^{(k)} ( \textbf{F}_l^{(k)} )^{-\frac{1}{2}}$ is always
positive.} of $( \textbf{F}_l^{(k)} )^{-\frac{1}{2}}
\textbf{E}_l^{(k)} ( \textbf{F}_l^{(k)} )^{-\frac{1}{2}}$. In turn,
the optimal unit norm decorrelator is given by $( \textbf{u}_l^{(k)}
)^\star = \frac{( \textbf{F}_l^{(k)} )^{-\frac{1}{2}} (
\textbf{w}_l^{(k)} )^\star}{|| ( \textbf{F}_l^{(k)} )^{-\frac{1}{2}}
( \textbf{w}_l^{(k)} )^\star ||}$.

\section{Proof: Optimality of the SDR Solution for
Problem~$\widetilde{\mathcal{Q}}_{\textrm{v}}$}
\label{Proof:Optimiality_conditions} By using SDR, we solve the
following SDP problem with complex-valued parameters:
\begin{IEEEeqnarray*}{cl}\label{Eqn:HH_primal}
\min_{\textbf{V}_m^{(j)}, \Xi}&\;\; \Xi\IEEEyessubnumber\label{Eqn:R_V_cost}\\
\textrm{s. t.} & \textstyle\sum\nolimits_{m=1}^{L_j} \textrm{Tr} (
\textbf{V}_m^{(j)} ) \leq \rho_j \Xi,\forall j \in \mathcal{K}\\
&\textstyle\sum_{j=1}^K\!\sum_{m=1}^{L_j}\!\text{Tr}(\mathbf{A}_{(l,m)}^{(k,j)}\mathbf{V}_{m}^{(j)})\!\geq\!
b_{l}^{(k)}\!, \forall l\!\in\!\mathcal{L}_k, \forall k\!
\in\!\mathcal{K},\;\;\;\;\;\;\;\IEEEyessubnumber\label{Eqn:R_V_QoS}\\
&\Xi\geq 0,\IEEEyessubnumber\\
&\textbf{V}_m^{(j)} \succeq 0, \forall m \in \mathcal{L}_j, \forall
j \in \mathcal{K},\IEEEyessubnumber\label{Eqn:R_V_SD}
\end{IEEEeqnarray*}
where $\mathbf{A}_{(l,m)}^{(k,j)}\in\mathbb{H}^{M}$ is given by
\begin{equation*}
\mathbf{A}_{(l,m)}^{(k,j)}\!\!=\!\!\left\{\!\!\!\!
\begin{array}{ll}
( \widehat{\textbf{H}}^{(k,k)}\!)^\dag \textbf{u}_l^{(k)}\!(
\textbf{u}_l^{(k)} )^\dag
\widehat{\textbf{H}}^{(k,k)}\!\!-\!\varepsilon || \textbf{u}_l^{(k)}
||^2\mathbf{I}\!\!\!\!\!\!& \begin{array}{l}\text{if }
j\!=\!k\\\text{and }
m\!=\!l\end{array}\\
-\!\gamma(( \widehat{\textbf{H}}^{(k,j)}\!)^\dag
\textbf{u}_l^{(k)}\!( \textbf{u}_l^{(k)} )^\dag
\widehat{\textbf{H}}^{(k,j)}\!\!\!+\!\varepsilon ||
\textbf{u}_l^{(k)} ||^2\mathbf{I} )\!\!\!\!\!\!& \text{ otherwise }
\end{array}\right.\;\;\;\;\;\;
\end{equation*}
\addtocounter{equation}{2}
\begin{figure*}[!t]
\normalsize
\begin{IEEEeqnarray*}{ll}
\mathbf{Z}_{m}^{(j)}&\textstyle=x^{(j)}\mathbf{I}-\sum_{k=1}^K\sum_{l=1}^{L_k}y_{l}^{(k)}
\rho_j\mathbf{A}_{(l,m)}^{(k,j)}\IEEEyesnumber\label{Eqn:wxy}\\
&=\underbrace{x^{(j)}\mathbf{I}+\textstyle\sum_{k=1}^K\textstyle\sum_{l=1}^{L_k}y_{l}^{(k)}\rho_j\gamma\left((
\widehat{\textbf{H}}^{(k,j)} )^\dag \textbf{u}_l^{(k)} (
\textbf{u}_l^{(k)} )^\dag \widehat{\textbf{H}}^{(k,j)} + \varepsilon
|| \textbf{u}_l^{(k)} ||^2\mathbf{I} \right)\mathcal{I}{\{k\neq
j\&l\neq m\}}+y_{m}^{(j)}\rho_j\varepsilon ||
\textbf{u}_m^{(j)}||^2\mathbf{I}}_{\text{rank } M} \\
&-\underbrace{y_{m}^{(j)}\rho_j( \widehat{\textbf{H}}^{(j,j)} )^\dag
\textbf{u}_m^{(j)} ( \textbf{u}_m^{(j)} )^\dag
\widehat{\textbf{H}}^{(j,j)}}_{\text{rank } 1}.
\end{IEEEeqnarray*}
\hrulefill
\end{figure*}
\addtocounter{equation}{-3}and $b_{l}^{(k)}=\gamma N_0 ||
\textbf{u}_l^{(k)} ||^2>0$. The corresponding dual problem is given
by the following SDP:
\begin{IEEEeqnarray*}{cl}\label{Eqn:HH_dual}
\max_{\substack{y_{l}^{(k)}\\x^{(j)}}}&\;
\textstyle\sum\nolimits_{k=1}^K\textstyle\sum\nolimits_{l=1}^{L_k}y_{l}^{(k)}b_{l}^{(k)}\IEEEyessubnumber\label{Eqn:ggg}\\
\textrm{s. t.}
&\;\underbrace{x^{(j)}\mathbf{I}\!-\!\!\textstyle\sum_{k=1}^K\!\sum_{l=1}^{L_k}y_{l}^{(k)}
\!\!\rho_j\mathbf{A}_{(l,m)}^{(k,j)}}_{=\mathbf{Z}_{m}^{(j)}}\!\succeq\!0,
\forall m\!\!\in\!\mathcal{L}_j, \forall j\!\!\in\!\mathcal{K},\;\;\;\;\;\IEEEyessubnumber\label{Eqn:xxx}\\
&1-\textstyle\sum_{j=1}^Kx^{(j)} \geq 0,\IEEEyessubnumber\\
&y_{l}^{(k)}\geq 0, \forall k \in \mathcal{K}, \forall l \in
\mathcal{L}_k,\IEEEyessubnumber\\
&x^{(j)}\geq 0, \forall j \in
\mathcal{K},\IEEEyessubnumber\label{Eqn:nnn}
\end{IEEEeqnarray*}

Note that $(\mathbf{V}_{m}^{(j)})^\star \neq\mathbf{0},\forall j \in
\mathcal{K}, \forall m \in \mathcal{L}_j$, and from the
complementary conditions for the primal and dual SDP:
\begin{equation}
\label{Eqn:HH_comple}
\text{Tr}(\mathbf{Z}_{m}^{(j)}(\mathbf{V}_{m}^{(j)})^\star)=0,\forall
j \in \mathcal{K}, \forall m \in \mathcal{L}_j
\end{equation}
\addtocounter{equation}{1} we can infer that
$\mathbf{Z}_{m}^{(j)}\not\succ\mathbf{0}$. Suppose that one of the
optimal values $\{\{(y_{l}^{(k)})^\star \}_{l=1}^{L_k}\}_{k=1}^K$
for the dual problem, say $(y_{1}^{(1)})^\star=0$, then
\begin{equation*}
\mathbf{Z}_{1}^{(1)}=x^{(1)}\mathbf{I}+\sum_{k=1}^K
\sum_{l=1}^{L_k}y_{l}^{(k)} (-
\rho_1\mathbf{A}_{(l,1)}^{(k,1)})\mathcal{I}{\{k\neq 1\&l\neq
1\}}\succ\mathbf{0}.
\end{equation*}
It contradicts the fact $\mathbf{Z}_{1}^{(1)}\not\succ\mathbf{0}$,
and hence $(y_{l}^{(k)})^\star>0, \forall k \in \mathcal{K}, \forall
l \in \mathcal{L}_k$. From (\ref{Eqn:xxx}) and (\ref{Eqn:wxy}),
$\text{rank}(\mathbf{Z}_{m}^{(j)}) \ge M-1$. On the other hand, from
(\ref{Eqn:HH_comple}), since $\textbf{Z}_m^{(j)}\nsucc \textbf{0}$
so $\text{rank}(\mathbf{Z}_{m}^{(j)}) < M$. It follows that
$\text{rank}(\mathbf{Z}_{m}^{(j)})=M-1$. Moreover, due to
(\ref{Eqn:HH_comple}) the optimal solution
$\{\{(\mathbf{V}_{m}^{(j)})^\star\}_{m=1}^{L_j}\}_{j=1}^K$ of primal
problem (\ref{Eqn:HH_primal}) must be of rank one. In other words,
there will be zero duality gap between the primal non-convex problem
$\widetilde{\mathcal{Q}}_\textrm{v}$ and the dual problem obtained
by relaxing the rank constraint given by (\ref{Eqn:HH_dual}).

\section{Proof: Convergence of
Algorithm~\ref{Algorithm:Top-level}} \label{Proof:Convergence} At
the $n^{\textrm{th}}$ iteration of
Algorithm~\ref{Algorithm:Top-level}, we denote the precoders as $\{
\{ \widetilde{\textbf{v}}_m^{(j)}[n] \}_{m=1}^{L_j} \}_{j=1}^K$, the
decorrelators as $\{ \{ \widetilde{\textbf{u}}_m^{(j)}[n]
\}_{m=1}^{L_j} \}_{j=1}^K$, the minimum SINR as
$\widetilde{\gamma}[n]$, and the transmit power scaling factor as
$\widetilde{\beta}[n]$.

Upon initialization, we \emph{define} the minimum SINR as
$\widetilde{\gamma}[0] = 0$ and start with arbitrary precoders $\{
\{ \widetilde{\textbf{v}}_m^{(j)}[0] \}_{m=1}^{L_j} \}_{j=1}^K$,
where the transmit power of the $j^{\textrm{th}}$ source node is
$\sum_{m=1}^{L_j} ( \widetilde{\textbf{v}}_m^{(j)}[0] )^\dag
\widetilde{\textbf{v}}_m^{(j)}[0] = P_j$, and the transmit power
scaling factor is $\widetilde{\beta}[0] = \min( P_1, \ldots, P_K )$.

In the following, we show that each iteration of
Algorithm~\ref{Algorithm:Top-level} increases the minimum SINR, i.e.
$\widetilde{\gamma}[n] \ge \widetilde{\gamma}[n-1]$, so
Algorithm~\ref{Algorithm:Top-level} must converge.

In Step 1, given the precoders $\{ \{
\widetilde{\textbf{v}}_m^{(j)}[n-1] \}_{m=1}^{L_j} \}_{j=1}^K$, the
decorrelators $\{ \{ \widetilde{\textbf{u}}_m^{(j)}[n]
\}_{m=1}^{L_j} \}_{j=1}^K$ are optimized to increase the minimum
SINR, i.e.
\begin{IEEEeqnarray*}{Rl}
\widehat{\gamma} &= \displaystyle \min_{\substack{l\in
\mathcal{L}_k\\k\in \mathcal{K}}} \textstyle
\widetilde{\gamma}_l^{(k)} ( \widehat{\mathcal{H}}, \{ \{
\widetilde{\textbf{v}}_m^{(j)}[n\!-\!1]
\}_{m=1}^{L_j} \}_{j=1}^K, \widetilde{\textbf{u}}_l^{(k)}[n] )\\
&\ge \displaystyle \min_{\substack{l\in \mathcal{L}_k\\k\in
\mathcal{K}}} \textstyle \widetilde{\gamma}_l^{(k)} (
\widehat{\mathcal{H}}, \{ \{ \widetilde{\textbf{v}}_m^{(j)}[n\!-\!1]
\}_{m=1}^{L_j} \}_{j=1}^K,
\widetilde{\textbf{u}}_l^{(k)}[n\!-\!1] )\\
&=\widetilde{\gamma}[n\!-\!1].\IEEEyesnumber\label{Eqn:Convergence1}
\end{IEEEeqnarray*}
In Step 3, given the decorrelators $\{ \{
\widetilde{\textbf{u}}_m^{(j)}[n] \}_{m=1}^{L_j} \}_{j=1}^K$ and the
minimum SINR constraint $\widehat{\gamma}$, the precoders $\{ \{
\widetilde{\textbf{v}}_m^{(j)}[n] \}_{m=1}^{L_j} \}_{j=1}^K$ are
optimized to jointly reduce the transmit powers of all nodes, i.e.
the minimum SINR is unchanged
\begin{IEEEeqnarray*}{Rl}
\widehat{\gamma} &= \displaystyle \min_{\substack{l\in
\mathcal{L}_k\\k\in \mathcal{K}}} \textstyle
\widetilde{\gamma}_l^{(k)} ( \widehat{\mathcal{H}}, \{ \{
\widetilde{\textbf{v}}_m^{(j)}[n] \}_{m=1}^{L_j}
\}_{j=1}^K, \widetilde{\textbf{u}}_l^{(k)}[n] )\\
&= \displaystyle \min_{\substack{l\in \mathcal{L}_k\\k\in
\mathcal{K}}} \textstyle \widetilde{\gamma}_l^{(k)} (
\widehat{\mathcal{H}}, \{ \{ \widetilde{\textbf{v}}_m^{(j)}[n\!-\!1]
\}_{m=1}^{L_j} \}_{j=1}^K, \widetilde{\textbf{u}}_l^{(k)}[n] )
\end{IEEEeqnarray*}
whereas the transmit powers of all source nodes are reduced
\begin{IEEEeqnarray*}{Rl}
\rho_j \widetilde{\beta}[n] &= \textstyle\sum_{m=1}^{L_j} (
\widetilde{\textbf{v}}_m^{(j)}[n] )^\dag
\widetilde{\textbf{v}}_m^{(j)}[n]\\
&\leq \textstyle\sum_{m=1}^{L_j} (
\widetilde{\textbf{v}}_m^{(j)}[n-1] )^\dag
\widetilde{\textbf{v}}_m^{(j)}[n-1]\\
&= P_j.
\end{IEEEeqnarray*}
In Step 5, the precoders are up-scaled to the power constraint,
i.e.$\textbf{v}_m^{(j)}[n] = \sqrt{P_j/(\rho_j
\widetilde{\beta}[n])} \textbf{v}_m^{(j)}[n]$, where by definition
$P_1/\rho_1 = \ldots = P_K/\rho_K$. As such, the minimum SINR is
increased according to
\begin{IEEEeqnarray*}{l}
\widehat{\gamma} = \displaystyle \min_{\substack{l\in
\mathcal{L}_k\\k\in \mathcal{K}}} \textstyle
\widetilde{\gamma}_l^{(k)} ( \widehat{\mathcal{H}}, \{ \{
\textbf{v}_m^{(j)}[n] \}_{m=1}^{L_j} \}_{j=1}^K,
\textbf{u}_l^{(k)}[n] )\\
=\!\displaystyle \min_{\substack{l\in \mathcal{L}_k\\k\in
\mathcal{K}}}\!\textstyle \frac{|| ( \textbf{u}_l^{(k)}[n] )^\dag
\widehat{\textbf{H}}^{(k,k)} \textbf{v}_l^{(k)}[n] ||^2 -
\varepsilon || \textbf{u}_l^{(k)}[n] ||^2 || \textbf{v}_l^{(k)}[n]
||^2}{\left(\substack{\sum_{j=1}^K\!\sum_{m=1}^{L_j} || (
\textbf{u}_l^{(k)}[n] )^\dag \widehat{\textbf{H}}^{(k,j)}
\textbf{v}_m^{(j)}[n] ||^2\\\!+ \varepsilon || \textbf{u}_l^{(k)}[n]
||^2 \sum_{j=1}^K \sum_{m=1}^{L_j} || \textbf{v}_m^{(j)}[n] ||^2\!-
\varepsilon || \textbf{u}_l^{(k)}[n] ||^2 || \textbf{v}_l^{(k)}[n]
||^2\\- || ( \textbf{u}_l^{(k)}[n] )^\dag
\widehat{\textbf{H}}^{(k,k)} \textbf{v}_l^{(k)}[n] ||^2 + N_0 ||
\textbf{u}_l^{(k)}[n]
||^2}\right)}\\
< \displaystyle \min_{\substack{l\in \mathcal{L}_k\\k\in
\mathcal{K}}} \textstyle \widetilde{\gamma}_l^{(k)} (
\widehat{\mathcal{H}}, \{ \{ \sqrt{P_K/(\rho_K\widetilde{\beta}[n])}
\textbf{v}_m^{(j)}[n]
\}_{m=1}^L \}_{j=1}^K, \textbf{u}_l^{(k)}[n] )\\
=\!\displaystyle \min_{\substack{l\in \mathcal{L}_k\\k\in
\mathcal{K}}}\!\textstyle \frac{|| ( \textbf{u}_l^{(k)}[n] )^\dag
\widehat{\textbf{H}}^{(k,k)} \textbf{v}_l^{(k)}[n] ||^2 -
\varepsilon || \textbf{u}_l^{(k)}[n] ||^2 || \textbf{v}_l^{(k)}[n]
||^2}{\left(\substack{\sum_{j=1}^K\!\sum_{m=1}^{L_j} || (
\textbf{u}_l^{(k)}[n] )^\dag \widehat{\textbf{H}}^{(k,j)}
\textbf{v}_m^{(j)}[n] ||^2\\\!+ \varepsilon || \textbf{u}_l^{(k)}[n]
||^2 \sum_{j=1}^K \sum_{m=1}^{L_j} || \textbf{v}_m^{(j)}[n] ||^2\!-
\varepsilon || \textbf{u}_l^{(k)}[n] ||^2 || \textbf{v}_l^{(k)}[n]
||^2\\- || ( \textbf{u}_l^{(k)}[n] )^\dag
\widehat{\textbf{H}}^{(k,k)} \textbf{v}_l^{(k)}[n] ||^2 +
\frac{N_0}{(P_K/\rho_K)(1/\widetilde{\beta}[n])} ||
\textbf{u}_l^{(k)}[n] ||^2}\right)}\\
= \widetilde{\gamma}[n].\IEEEyesnumber\label{Eqn:Convergence2}
\end{IEEEeqnarray*}
It follows from (\ref{Eqn:Convergence1}) and
(\ref{Eqn:Convergence2}) that the minimum SINR increases with each
iteration, i.e. $\widetilde{\gamma}[n] \ge \widehat{\gamma} \ge
\widetilde{\gamma}[n\!-\!1]$, and
Algorithm~\ref{Algorithm:Top-level} must converge.

\bibliographystyle{IEEEtran}
\bibliography{IEEEabrv,myBibFile}

\begin{thebibliography}{10}
\providecommand{\url}[1]{#1}
\csname url@samestyle\endcsname
\providecommand{\newblock}{\relax}
\providecommand{\bibinfo}[2]{#2}
\providecommand{\BIBentrySTDinterwordspacing}{\spaceskip=0pt\relax}
\providecommand{\BIBentryALTinterwordstretchfactor}{4}
\providecommand{\BIBentryALTinterwordspacing}{\spaceskip=\fontdimen2\font plus
\BIBentryALTinterwordstretchfactor\fontdimen3\font minus
  \fontdimen4\font\relax}
\providecommand{\BIBforeignlanguage}[2]{{%
\expandafter\ifx\csname l@#1\endcsname\relax
\typeout{** WARNING: IEEEtran.bst: No hyphenation pattern has been}%
\typeout{** loaded for the language `#1'. Using the pattern for}%
\typeout{** the default language instead.}%
\else
\language=\csname l@#1\endcsname
\fi
#2}}
\providecommand{\BIBdecl}{\relax}
\BIBdecl

\bibitem{Jnl:Interferce_as_noise:Tse}
R.~H. Etkin, D.~N.~C. Tse, and H.~Wang, ``Gaussian interference channel
  capacity to within one bit,'' \emph{{IEEE} Trans. Inf. Theory}, vol.~54, pp.
  5534--5562, Dec. 2008.

\bibitem{Jnl:Interferce_cancel:Carleial}
A.~Carleial, ``A case where interference does not reduce capacity,''
  \emph{{IEEE} Trans. Inf. Theory}, vol.~21, pp. 569--570, Sep. 1975.

\bibitem{Bok:Fundamentals_wireless:Tse}
D.~Tse and P.~Viswanath, \emph{Fundamentals of Wireless Communication}.\hskip
  1em plus 0.5em minus 0.4em\relax Cambridge, UK: Cambridge University Press,
  2005.

\bibitem{Jnl:IA:Cadambe_Jafar}
V.~R. Cadambe and S.~A. Jafar, ``Interference alignment and degrees of freedom
  of the {K}-user interference channel,'' \emph{{IEEE} Trans. Inf. Theory},
  vol.~54, pp. 3425--3441, Aug. 2008.

\bibitem{Cnf:IA_alternating_minimization:Peters_Heath}
S.~W. Peters and {R. W. Heath, Jr.}, ``Interference alignment via alternating
  minimization,'' in \emph{Proc. {IEEE} {ICASSP}'09}, Apr. 2009.

\bibitem{Misc:Distributed_IA:Gomadam_Cadambe_Jafar}
\BIBentryALTinterwordspacing
K.~S. Gomadam, V.~R. Cadambe, and S.~A. Jafar, ``Approaching the capacity of
  wireless networks through distributed interference alignment.'' [Online].
  Available: \url{http://newport.eecs.uci.edu/~syed/papers/dist.pdf}
\BIBentrySTDinterwordspacing

\bibitem{Misc:Real_IA_SISO:Motahari_Khandani}
\BIBentryALTinterwordspacing
A.~S. Motahari, S.~O. Gharan, M.~Maddaha-Ali, and A.~K. Khandani, ``Real
  interference alignment: Exploiting the potential of single antenna systems,''
  2009. [Online]. Available: \url{http://arxiv.org/abs/0908.2282}
\BIBentrySTDinterwordspacing

\bibitem{Misc:Real_IA_MIMO:Ghasemi_Khandani}
\BIBentryALTinterwordspacing
A.~Ghasemi, A.~S. Motahari, and A.~K. Khandani, ``Interference alignment for
  the {K} user {MIMO} interference channel,'' 2009. [Online]. Available:
  \url{http://arxiv.org/abs/0909.4604}
\BIBentrySTDinterwordspacing

\bibitem{Cnf:IA_feasibility:Yetis_Jafar}
C.~M. Yetis, T.~Gou, S.~A. Jafar, and A.~H. Kayran, ``Feasibility conditions
  for interference alignment,'' in \emph{Proc. {IEEE} {GLOBECOM}'09}, Nov.
  2009.

\bibitem{Jnl:RobustQosBroadcastMiso:Davidson}
M.~Botros and T.~N. Davidson, ``Convex conic formulations of robust downlink
  precoder designs with quality of service constraints,'' \emph{{IEEE} J. Sel.
  Areas Signal Process.}, vol.~1, pp. 714--724, Dec. 2007.

\bibitem{Jnl:RobustQosP2PMimo:Palomar}
A.~Pascual-Iserte, D.~P. Palomar, A.~I. Prez-Neira, and M.~A. Lagunas, ``A
  robust maximin approach for {MIMO} communications with partial channel state
  information based on convex optimization,'' \emph{{IEEE} Trans. Signal
  Process.}, vol.~54, pp. 346--360, Jan. 2006.

\bibitem{Cnf:RobustQosBroadcastMimo:Boche}
N.~Vucic, H.~Boche, and S.~Shi, ``Robust transceiver optimization in downlink
  multiuser {MIMO} systems with channel uncertainty,'' in \emph{Proc. {IEEE}
  {ICC}'08}, 2008.

\bibitem{Jnl:RobustQosP2PMimo:Miquel}
M.~Payaro, A.~Pascual-Iserte, and M.~A. Lagunas, ``Robust power allocation
  designs for multiuser and multiantenna downlink communication systems through
  convex optimization,'' \emph{{IEEE} J. Sel. Areas Commun.}, vol.~25, pp.
  1392--1401, Sep. 2007.

\bibitem{Bok:Palomar}
Z.-Q. Luo and T.-H. Chang, ``{SDP} relaxation of homogeneous quadratic
  optimization: approximation bounds and applications,'' in \emph{Convex
  Optimization in Signal Processing and Communications}, D.~P. Palomar and
  Y.~C. Eldar, Eds.\hskip 1em plus 0.5em minus 0.4em\relax Cambridge, UK:
  Cambridge University Press, 2009.

\bibitem{Bok:Palomar2}
Y.~Huang, A.~D. Maio, and S.~Zhang, ``Semidefinite programming, matrix
  decomposition, and radar code design,'' in \emph{Convex Optimization in
  Signal Processing and Communications}, D.~P. Palomar and Y.~C. Eldar,
  Eds.\hskip 1em plus 0.5em minus 0.4em\relax Cambridge, UK: Cambridge
  University Press, 2009.

\bibitem{Jnl:Rank_constrained_separable_SDP:Yongwei_Palomar}
Y.~Huang and D.~P. Palomar, ``Rank-constrained separable semidefinite
  programming with applications to optimal beamforming,'' {IEEE} Trans. Signal
  Process., to be published.

\bibitem{Jnl:Downlink_beamforming:Ottersten}
D.~Hammarwall, M.~Bengtsson, and B.~Ottersten, ``On downlink beamforming with
  indefinite shaping constraints,'' \emph{{IEEE} Trans. Signal Process.},
  vol.~54, pp. 3566--3580, Sep. 2006.

\bibitem{Jnl:MBS-SDMA_using_SDR_with_perfect_CSI:Sidiropoulos_Luo}
E.~Karipidis, N.~D. Sidiropoulos, and Z.-Q. Luo, ``Quality of service and
  max-min fair transmit beamforming to multiple cochannel multicast groups,''
  \emph{{IEEE} Trans. Signal Process.}, vol.~56, pp. 1268--1279, Mar. 2008.

\bibitem{Jnl:MBS_using_SDR_with_perfect_CSI:Sidiropoulos_Davidson_Luo}
N.~D. Sidiropoulos, T.~N. Davidson, and Z.-Q. Luo, ``Transmit beamforming for
  physical-layer multicasting,'' \emph{{IEEE} Trans. Signal Process.}, vol.~54,
  pp. 2239--2251, Jun. 2006.

\bibitem{Misc:Chan-Byoung_Chae}
C.~B. Chae, ``{Multiuser}/multi-cell {MIMO} transmission with coordinated
  beamforming,'' in \emph{Proc. {IEEE} {CTW}'09}, May 2009.

\bibitem{Jnl:AO:Chan-Byoung_Chae}
C.-B. Chae and R.~W. Heath, ``On the optimality of linear multiuser {MIMO}
  beamforming for a two-user two-input multiple-output broadcast system,''
  \emph{{IEEE} Signal Process. Lett.}, vol.~16, pp. 117--120, Feb. 2009.

\bibitem{Jnl:Beamforming_duality:Liu}
F.~Rashid-Farrokhi, K.~J.~R. Liu, and L.~Tassiulas, ``Transmit beamforming and
  power control for cellular wireless systems,'' \emph{{IEEE} J. Sel. Areas
  Commun.}, vol.~16, pp. 1437--1450, Oct. 1998.

\bibitem{Cnf:Beamforming_duality:Boche}
H.~Boche and M.~Schubert, ``A general duality theory for uplink and downlink
  beamforming,'' in \emph{Proc. {IEEE} {VTC}'02}, 2002.

\bibitem{Jnl:AO_math_paper}
J.~Gorski, F.~Pfeuffer, and K.~Klamroth, ``Biconvex sets and optimization with
  biconvex functions -- a survey and extensions,'' \emph{Mathematical Methods
  of Operations Research}, vol.~66, pp. 373--408, Dec. 2007.

\bibitem{Jnl:Worst-case_robust_MIMO:Jiaheng_Palomar}
J.~Wang and D.~P. Palomar, ``Worst-case robust {MIMO} transmission with
  imperfect channel knowledge,'' \emph{{IEEE} Trans. Signal Process.}, vol.~57,
  pp. 3086--3100, Aug. 2009.

\bibitem{Jnl:Linear_precoding_conic:Eldar}
A.~Wiesel, Y.~C. Eldar, and S.~Shamai, ``Linear precoding via conic
  optimization for fixed {MIMO} receivers,'' \emph{{IEEE} Trans. Signal
  Process.}, vol.~54, pp. 161--176, Jan. 2006.

\bibitem{Bok:Convex_optimization:Boyd}
S.~Boyd and L.~Vandenberghe, \emph{Convex Optimization}.\hskip 1em plus 0.5em
  minus 0.4em\relax Cambridge, UK: Cambridge University Press, 2004.

\bibitem{Bok:Adaptive_filter_theory:Haykin}
S.~Haykin, \emph{Adaptive Filter Theory}, 4th~ed.\hskip 1em plus 0.5em minus
  0.4em\relax Upper Saddle River: Prentice-Hall, 2002.

\bibitem{Bok:Matrix_analysis:Horn}
R.~A. Horn and C.~R. Johnson, \emph{Matrix Analysis}.\hskip 1em plus 0.5em
  minus 0.4em\relax Cambridge, UK: Cambridge University Press, 1985.

\end{thebibliography}

\begin{IEEEbiography}[{\includegraphics[width=1in,height=1.25in,clip,keepaspectratio]{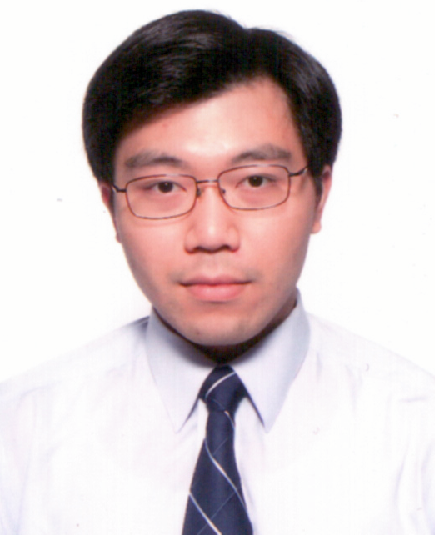}}]{Eddy
Chiu} received the B.A.Sc. (Honors) and M.A.Sc. degrees from Simon
Fraser University, Canada, in 2003 and 2006, respectively, both in
Electrical Engineering. Currently, he is working towards the Ph.D.
degree at the Department of Electronic and Computer Engineering,
Hong Kong University of Science and Technology. His research
interests include MIMO communications with limited feedback,
relay-assisted communications, and interference mitigation
techniques.
\end{IEEEbiography}

\vfill

\begin{IEEEbiography}[{\includegraphics[width=1in,height=1.25in,clip,keepaspectratio]{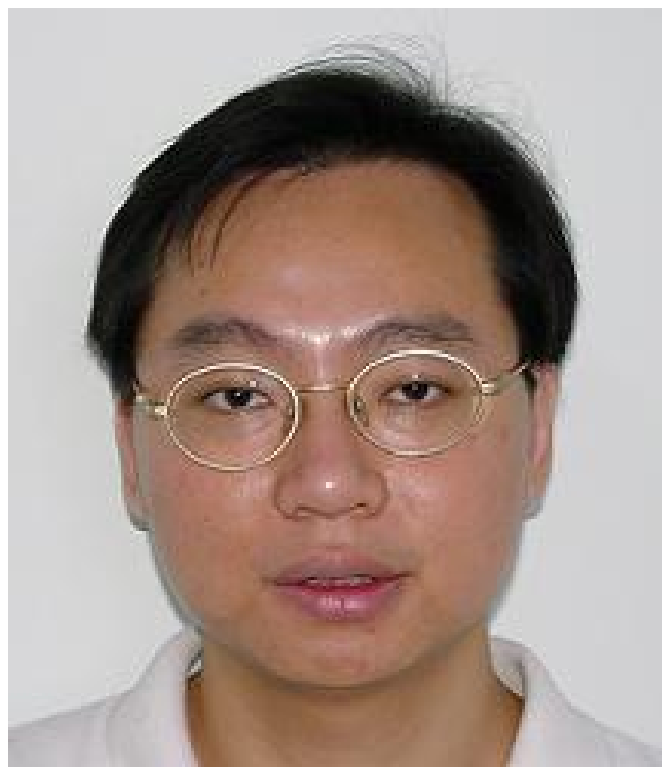}}]{Vincent
K. N. Lau} obtained B.Eng (Distinction 1st Hons) from the University
of Hong Kong in 1992 and Ph.D. from Cambridge University in 1997. He
was with PCCW as system engineer from 1992-1995 and Bell Labs -
Lucent Technologies as member of technical staff from 1997-2003. He
then joined the Department of Electronic and Computer Engineering,
Hong Kong University of Science and Technology as Professor. His
current research interests include robust and delay-sensitive
cross-layer scheduling of MIMO/OFDM wireless systems with imperfect
channel state information, cooperative and cognitive communications
as well as stochastic approximation and Markov Decision Process.
\end{IEEEbiography}

\vfill

\begin{IEEEbiography}[{\includegraphics[width=1in,height=1.25in,clip,keepaspectratio]{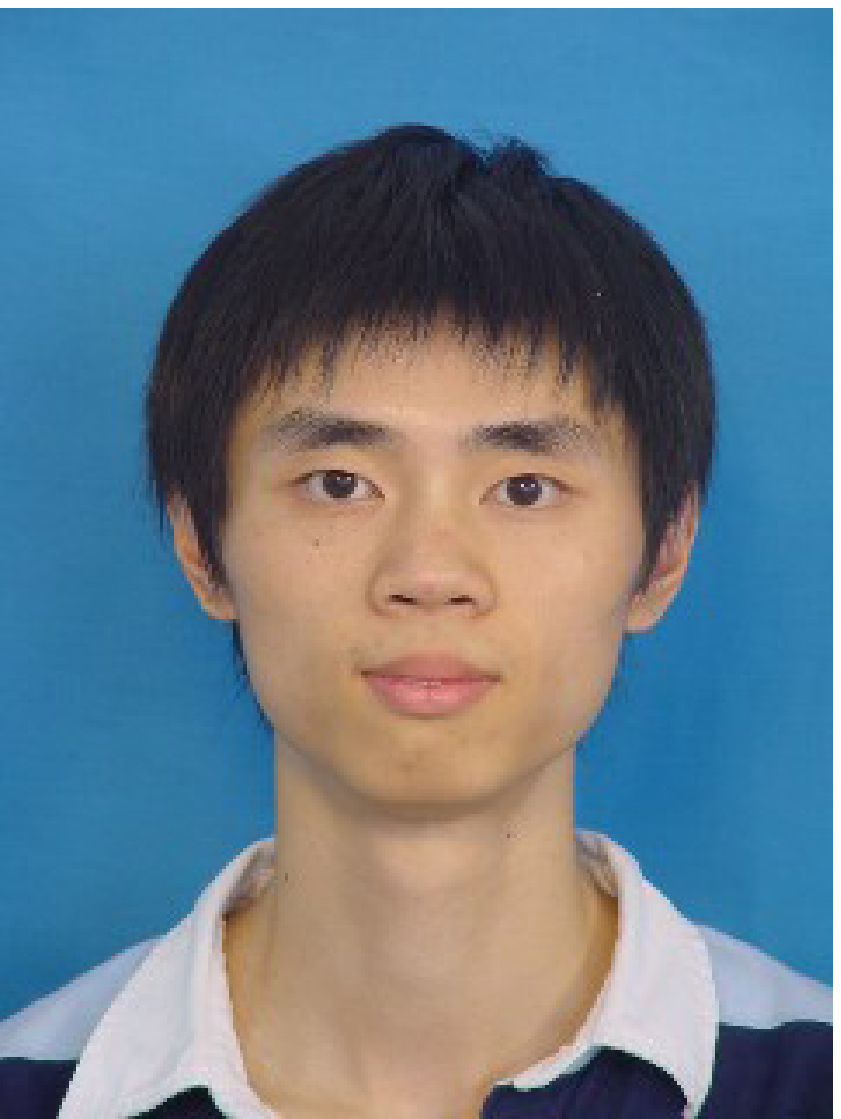}}]{Huang
Huang} received the B.Eng. and M.Eng. (Gold medal) from the Harbin
Institute of Technology (HIT) in 2005 and 2007 respectively, all in
Electrical Engineering. He is currently a PhD student at the
Department of Electronic and Computer Engineering, Hong Kong
University of Science and Technology. His recent research interests
include cross layer design, interference management in interference
network, and embedded system design.
\end{IEEEbiography}

\vfill

\begin{IEEEbiographynophoto}{Tao Wu}
\end{IEEEbiographynophoto}

\vfill

\begin{IEEEbiographynophoto}{Sheng Liu}
\end{IEEEbiographynophoto}

\vfill

\end{document}